\title{Quantum Complexity of Weighted Diameter and Radius in CONGEST Networks}
\author{
    Xudong Wu\thanks{Nanjing University. Email: xdwu@smail.nju.edu.cn}
    \and Penghui Yao\thanks{Nanjing University. Email: pyao@nju.edu.cn}
}
\date{\today}
\begin{document}

\maketitle

\begin{abstract}
This paper studies the round complexity of computing the weighted diameter and radius of a graph in the quantum CONGEST model.
We present a quantum algorithm that $(1+o(1))$-approximates the diameter and radius with round complexity $\widetilde O\left(\min\left\{n^{9/10}D^{3/10},n\right\}\right)$, where $D$ denotes the unweighted diameter.
This exhibits the advantages of quantum communication over classical communication since computing a $(3/2-\varepsilon)$-approximation of the diameter and radius in a classical CONGEST network takes $\widetilde\Omega(n)$ rounds, even if $D$ is constant \cite{AbboudCK16}.
We also prove a lower bound of $\widetilde\Omega(n^{2/3})$ for $(3/2-\varepsilon)$-approximating the weighted diameter/radius in quantum CONGEST networks, even if $D=\Theta(\log n)$.
Thus, in quantum CONGEST networks, computing weighted diameter and weighted radius of graphs with small $D$ is strictly harder than unweighted ones due to Le Gall and Magniez's $\widetilde O\left(\sqrt{nD}\right)$-round algorithm for unweighted diameter/radius~\cite{GallM18}.
\end{abstract}

\section{Introduction}

Quantum distributed computing has received great attention in the past decade~\cite{Ben-OrH05,TaniKM12,ElkinKNP14,GallM18,GallNR19,IzumiG19,IzumiGM20,Censor-HillelFG22,MagniezN20,GavoilleKM09}.
A large body of work has been devoted to investigating the quantum advantages in distributed computing.
In this paper, we are concerned with the CONGEST networks, which are one of the most fundamental models in distributed computing.
In a classical CONGEST network, the nodes synchronously exchange classical messages, and each channel has $O(\log n)$-bit bandwidth, where $n$ is the number of nodes in the network.
Quantum CONGEST networks were first introduced by  Elkin, Klauck, Nanongkai,
and Pandurangan~\cite{ElkinKNP14}, where the only difference is that the nodes exchange quantum messages and the bandwidth of each channel is $O(\log n)$ qubits. The round complexity of diameter and radius  of unweighted graphs in classical CONGEST networks has been extensively studied~\cite{AbboudCK16,AnconaCDEW20,HolzerW12,PelegRT12,FrischknechtHW12,HolzerPRW14}.
Le Gall and Magniez~\cite{GallM18} proved that quantum communication may save the round complexity in CONGEST networks if the graph has a low diameter.

In this paper, we further investigate the round complexity of computing the diameters and radius of weighted graphs in quantum CONGEST networks. We prove that quantum communication may also save the round complexity for both problems.

\subsection{Our Results}
\label{sec:results}

The following is one of our main results which asserts that quantum communication may save the round complexity for computing the weighted diameter and radius of a graph given that the graph has a low unweighted diameter.

\begin{theorem}
There exists a $\widetilde O\left(\min\left\{n^{9/10}D^{3/10},n\right\}\right)$-round distributed algorithm computing a $(1+o(1))$-approximation of the weighted diameter/radius with probability at least $1-1/\text{poly}(n)$, in the quantum CONGEST model, where $D$ denotes the unweighted diameter.
\label{thm:upper_bound}
\end{theorem}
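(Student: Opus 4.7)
The plan is to lift the classical hitting-set framework of Roditty and Vassilevska-Williams for approximate diameter/radius to the quantum CONGEST model, replacing its expensive SSSP and aggregation subroutines with quantum-accelerated variants built on the distance-computation primitive of Le Gall and Magniez~\cite{GallM18}. A single hop parameter $h$ controls the trade-off between the sampling phase and the local-exploration phase and is optimized at the end.

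In more detail, I would proceed in three stages. First, each node joins a random sample $S$ independently with probability $\widetilde\Theta(1/h)$, so $|S|=\widetilde O(n/h)$ and every vertex almost surely has an element of $S$ within $h$ hops; the set $S$ is broadcast via a BFS tree of the network in $\widetilde O(D+|S|)$ rounds. Second, from each source $u\in S$ I compute a $(1+o(1))$-approximate shortest-path tree and broadcast the distance vector $d(u,\cdot)$ to the whole network; the SSSP subroutine itself is obtained by a standard weight-rounding/scaling reduction of $(1+\varepsilon)$-approximate weighted SSSP to a polylogarithmic number of hop-bounded unweighted instances, each solved by the quantum Grover-based BFS of~\cite{GallM18}. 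Third, every vertex $v$ performs an $h$-hop weighted local flood to identify its nearest sample $u^\star(v)$ and the approximate distance to it; using the broadcast distance vectors, $v$ then approximates $\max_x d(v,x)$ by the maximum over $u\in S$ of $d(v,u^\star(v))+d(u^\star(v),u)$, which is tight up to a $(1+o(1))$ factor because the hitting set catches a near-optimal eccentricity path. A global maximum (or minimum, for radius) is aggregated on the BFS tree, with a second pass emitting a center certificate in the radius case.

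Setting $h$ to balance the $|S|$-many SSSP and broadcasting phase against the $h$-round local-exploration phase yields the claimed $\widetilde O(n^{9/10}D^{3/10})$, while the $\min$ with $n$ absorbs the degenerate regime handled by the naive $\widetilde O(n)$-round APSP broadcast. The main obstacle I anticipate is delivering the quantum approximate weighted SSSP with the right dependence on $D$: one must arrange the weight scaling so that the auxiliary unweighted graphs inherit an $\widetilde O(D)$-hop diameter, and then pipeline the $|S|$ sequential SSSP executions together with the broadcasting of the resulting distance trees so that neither step dominates the target complexity. The remaining ingredients (hitting-set concentration, triangle-inequality eccentricity estimate, and BFS-tree aggregation) are essentially classical and should port to the quantum model with only syntactic changes.
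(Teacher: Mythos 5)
There is a genuine gap, on two independent counts. First, your approximation analysis does not deliver $(1+o(1))$. Estimating $\max_x d(v,x)$ by $\max_{u\in S}\bigl(d(v,u^\star(v))+d(u^\star(v),u)\bigr)$ is the classical Roditty--Vassilevska Williams hitting-set estimator, and the detour through the nearest sampled node inherently loses a constant factor (this is exactly why that framework tops out at a $3/2$- or $2$-approximation); ``the hitting set catches a near-optimal eccentricity path'' is not true, since the diameter endpoints are almost surely not sampled and a weighted graph gives you no multiplicative control over the tail $d(u_x,x)$ from an $h$-hop-close sample. The paper circumvents this differently: it samples $n$ independent sets $S_1,\dots,S_n$ so that the (unknown) node $v^\star$ achieving the diameter lands in a $\Theta(r/n)$ fraction of them, and uses Nanongkai's skeleton construction to compute a genuine $(1+\varepsilon)^2$-approximate eccentricity of every \emph{sampled} node; the maximum of $\widetilde e(s)$ over $s\in S_i$ for a set containing $v^\star$ then sandwiches $D_{G,w}$ up to $(1+\varepsilon)^2$. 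Some such device for getting the optimal node itself into the evaluated population is unavoidable for a $(1+o(1))$ guarantee.

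Second, you have not identified where the quantum speedup comes from. There is no ``Grover-based BFS'' in \cite{GallM18} that accelerates a single SSSP; the quantum gain there (and here) comes from distributed minimum/maximum finding, i.e.\ Grover search over \emph{which} source to evaluate. Your plan runs SSSP from all $|S|$ sources and broadcasts all distance vectors, which is purely classical work and already forces $\widetilde\Omega(n)$ rounds if you want $(1+o(1))$ accuracy for eccentricities. The paper instead uses the nested optimization framework of Lemma~\ref{lem:optimization}: an outer quantum search over the $n$ candidate sets costing $\widetilde O(\sqrt{n/r})$ iterations, each iteration running an inner quantum search over the $\Theta(r)$ members of one set costing $\widetilde O(\sqrt{r})$ evaluations of a reversible (quantized) version of Nanongkai's eccentricity subroutine. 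Balancing $r=n^{2/5}D^{-1/5}$ and $k=\sqrt{D}$ in that nested cost expression is what produces $n^{9/10}D^{3/10}$; your single parameter $h$ balancing sampling against local exploration does not reproduce this trade-off.
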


Holzer and Pinsker in~\cite{HolzerP15} and Abboud, Censor-Hillel and Khoury in~\cite{AbboudCK16} proved that $(2-o(1))$-approximating the diameter and $(3/2-\varepsilon)$-approximating the (even unweighted) radius in the classical CONGEST network require $\widetilde\Omega(n)$ rounds, even when $D$ is constant.
Therefore, Theorem~\ref{thm:upper_bound} exhibits the advantages of quantum communication over classical communication in approximating the weighted diameter/radius when $D=o(n^{1/3})$.

We prove Theorem~\ref{thm:upper_bound} by applying the framework of distributed quantum optimization introduced by Le Gall and Magniez in~\cite{GallM18}.
Note that the diameter and radius are the maximum and minimum of eccentricities respectively.
It will not give a sublinear-time algorithm if we simply apply a quantum search algorithm, because evaluating the eccentricity of one node takes $\widetilde\Theta\left(\sqrt n\right)$ rounds (the lower bound is due to \cite{ElkinKNP14}), and the searching process should require another $\widetilde\Theta\left(\sqrt n\right)$ times of evaluation (the number of nodes with maximum/minimum eccentricity maybe $O(1)$).
Thus the number of rounds in total will be $\widetilde{\Theta}(n)$.\footnote{
$\widetilde O(\cdot)$, $\widetilde\Omega(\cdot)$, and $\widetilde\Theta(\cdot)$ hide polylogarithmic factors.}
Our algorithm is inspired by Nanongkai's algorithm~\cite{Nanongkai14STOC} for approximating the weighted shortest paths in a classical network.
The algorithm constructs several small vertex sets and searches the node achieving the maximum/minimum eccentricity within those sets, which turns out to be a good approximation of diameter/radius.
Our algorithm quantizes Nanongkai's algorithm using the standard technique \cite{Bennett89} and further combines with the framework of distributed quantum optimization in~\cite{GallM18}.

We also prove lower bounds for approximating weighted diameter and radius.

\begin{theorem}
Any algorithm computing a $(3/2-o(1))$-approximation of the weighted diameter/radius requires $\widetilde\Omega(n^{2/3})$ rounds, in the quantum CONGEST model, even when $D=\Theta(\log n)$.
\label{thm:lower_bound_raw}
\end{theorem}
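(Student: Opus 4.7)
The plan is to prove Theorem~\ref{thm:lower_bound_raw} by reducing from the two-party quantum communication complexity of set disjointness. By Razborov's lower bound, any bounded-error quantum protocol solving $N$-bit set disjointness, even with shared entanglement, requires $\Omega(\sqrt{N})$ qubits of communication. The overall recipe -- simulate the CONGEST algorithm across a small vertex cut -- is standard in the area and was used, e.g., by Elkin, Klauck, Nanongkai, and Pandurangan~\cite{ElkinKNP14} to derive the $\widetilde\Omega(\sqrt{n})$ lower bound for exact diameter; the novelty needed here is to combine it with a carefully parameterized graph family whose weighted diameter exhibits a $3/2$ multiplicative gap.

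First I would establish the bottleneck simulation lemma: if an instance $G_{x,y}$ admits a vertex partition $V = V_A \sqcup V_B$ with only $B$ crossing edges, then any $R$-round quantum CONGEST algorithm on $G_{x,y}$ can be converted into a two-party quantum protocol in which Alice (holding $G_{x,y}[V_A]$) and Bob (holding $G_{x,y}[V_B]$) exchange at most $O(R \cdot B \log n)$ qubits. The simulation is immediate because each of the $B$ crossing edges carries only $O(\log n)$ qubits per round.

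Next I would design the family $\{G_{x,y}\}_{x,y \in \{0,1\}^N}$ on $n$ vertices so that (i) the unweighted diameter is $\Theta(\log n)$, (ii) Alice can build $G_{x,y}[V_A]$ from $x$ alone and Bob can build $G_{x,y}[V_B]$ from $y$ alone, (iii) the $V_A/V_B$ cut has $B$ edges, (iv) the weighted diameter (respectively radius) is at most $L$ when $x \cap y = \emptyset$ and at least $\tfrac32 L$ otherwise, and (v) the parameters are balanced so that $\sqrt{N}/B = \widetilde\Omega(n^{2/3})$; a concrete choice is $N = \widetilde\Theta(n^{4/3} B^2)$, realized, for instance, by $B = \widetilde\Theta(1)$ with $N = \widetilde\Theta(n^{4/3})$, or $B = \widetilde\Theta(n^{1/3})$ with $N = \widetilde\Theta(n^2)$. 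The construction would adapt the $3/2$-gap diameter gadget of Abboud, Censor-Hillel, and Khoury~\cite{AbboudCK16} to the quantum regime: each party encodes its disjointness input into a bipartite structure on $\Theta(n)$ auxiliary vertices, and two designated vertices straddling the cut end up at weighted distance $L$ iff some ``color'' survives in both encodings -- i.e., iff $x \cap y = \emptyset$. To keep the unweighted diameter at $\Theta(\log n)$, I would overlay a constant-degree expander of high-weight edges (of weight much larger than $L$) routed through the $B$ bottleneck vertices, so that it contributes to unweighted connectivity while being invisible to the near-optimal weighted distance. The radius version requires a symmetric variant of the gadget that forces the minimum eccentricity (rather than the maximum) to shift by the $3/2$ factor.

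Plugging these together, any $R$-round algorithm that $(3/2-o(1))$-approximates the weighted diameter distinguishes the two cases and, via the bottleneck lemma, yields a quantum disjointness protocol on $N$ bits with $O(R \cdot B \log n)$ qubits; Razborov's bound then forces $R = \widetilde\Omega(\sqrt{N}/B) = \widetilde\Omega(n^{2/3})$. The main obstacle is engineering all four requirements above into one construction at once: the $3/2$ multiplicative gap, the size-$B$ cut, the $\Theta(\log n)$ unweighted diameter, and the embedding of a disjointness instance of size $\widetilde\Theta(n^{4/3} B^2)$. A low unweighted diameter tempts one to add many shortcut edges across the cut, which would inflate $B$, while a small $B$ tempts one to shrink the shortcut structure, which would inflate $D$; the resolution is to use high-weight expander shortcuts that are invisible to the approximate weighted diameter. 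Matching the parameters tightly enough that the same construction yields the $\widetilde\Omega(n^{2/3})$ bound for both diameter and radius is where the bulk of the technical work would lie.
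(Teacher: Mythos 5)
Your high-level recipe (encode a hard communication problem into edge weights, simulate the CONGEST algorithm across a bottleneck, invoke a quantum communication lower bound) is the right family of ideas, but the specific two-party cut-based simulation you propose cannot reach $\widetilde\Omega(n^{2/3})$, and this is precisely the obstacle the paper's proof is built to circumvent. Your requirements (iii)--(v) are mutually inconsistent for any gadget of the Abboud--Censor-Hillel--Khoury type. To make the weighted distance $d(a_i,b_i)$ depend on $\bigvee_j(x_{i,j}\wedge y_{i,j})$ you need, for each column index $j\in[1,\ell]$, a dedicated channel across the cut joining $a^\star_j$ to $b^\star_j$; if two columns share cut edges, a shortest path can switch columns at the cut, the minimization over $j$ on Alice's side decouples from that on Bob's side, and the intersection structure is destroyed. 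Hence $B\ge\ell$, while the input size satisfies $N\le 2^s\cdot\ell\le n\cdot\ell$, so $\sqrt N/B\le\sqrt{2^s/\ell}\le\sqrt n$: a pure two-party bottleneck argument combined with Razborov's bound tops out at $\widetilde\Omega(\sqrt n)$, the bound already known from \cite{ElkinKNP14}. Both of your suggested parameter settings ($B=\widetilde\Theta(1)$ with $N=\widetilde\Theta(n^{4/3})$, and $B=\widetilde\Theta(n^{1/3})$ with $N=\widetilde\Theta(n^2)$) violate the necessary condition $B\ge N/n$. The high-weight expander overlay fixes the unweighted diameter but does nothing about this channel-counting obstruction.

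The paper escapes the obstruction by working in the Server model rather than the two-party model: the $2s+\ell=\widetilde\Theta(n^{1/3})$ channels are realized as paths of length $2^h=\widetilde\Theta(n^{2/3})$ hanging off a binary tree of height $h=\Theta(\log n)$, all initially simulated by the server, whose messages are free. For $T<2^h/2$ rounds the simulation fronts advancing from Alice's and Bob's ends never meet inside a path, so the only charged communication is the $O(h)$ tree edges per round, giving $Q^{sv}(F)=O(T\cdot h\cdot B)$ instead of $O(T\cdot\ell\cdot B)$ (Lemma~\ref{lem:simulation}). Two further points would need repair even granting a suitable simulation: (a) in the Server model Razborov's two-party disjointness bound does not apply, since the server communicates for free, so the paper instead lower-bounds $Q^{sv}$ via the lifting theorem of \cite{ElkinKNP14} (Lemma~\ref{lem:lifting}) together with the $\Theta(\sqrt k)$ approximate degree of read-once formulas (Lemma~\ref{lem:read_once}); and (b) for the \emph{diameter} the relevant function is not disjointness but $\mathrm{AND}_{2^s}\circ(\mathrm{OR}_\ell\circ\mathrm{AND}_2^\ell)^{2^s}$, because the diameter is a maximum over all pairs $(a_i,b_i)$ and one must force \emph{every} such pair to be close in the ``yes'' case; plain disjointness only governs the radius construction.
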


The hardness of both problems is proved via the communication complexity of quantum Server models.
The Server model is a variant of two-party communication complexity models introduced in \cite{ElkinKNP14}.
Combining with the graph gadget in \cite{AbboudCK16}, we get a reduction from the communication complexity of certain read-once functions to the round complexity of approximating the weighted diameter and radius.
We further apply a lifting theorem of quantum communication complexity~\cite{ElkinKNP14} to obtain the desired lower bounds.

Compared with Le Gall and Magniez's algorithm~~\cite{GallM18} for unweighted diameter/radius with $\widetilde O\left(\sqrt{nD}\right)$ rounds,  Theorem~\ref{thm:lower_bound_raw} says that computing weighted diameter/radius is strictly harder than unweighted diameter/radius, when $D$ is small.
While in the classical setting, computing weighted and unweighted diameter/radius have the same round complexity $\Theta(n)$~\cite{AbboudCK16,BernsteinN19}\footnote{The lower bound is proved in~\cite{AbboudCK16}. The upper bound is followed from the $\widetilde O(n)$-round algorithm for the exact weighted All-Pairs Shortest Paths (APSP) in~\cite{BernsteinN19}.}.

\subsection{Related Works}

A series of works started with the distance computation in the classical CONGEST network.
Earlier, Frischknecht, Holzer, and Wattenhofer~\cite{FrischknechtHW12} showed that computing the diameter of an unweighted graph with constant diameter requires $\widetilde\Omega(n)$ rounds, which is tight up to logarithmic factors since even computing All-Pairs Shortest Paths (ASAP) on an unweighted graph can be resolved in $O(n)$ rounds \cite{HolzerW12,PelegRT12}.
Abboud, Censor-Hillelet, and Khoury~\cite{AbboudCK16} later gave the same lower bound of $\widetilde\Omega(n)$ for $(3/2-\varepsilon)$-approximating the diameter/radius in sparse networks.
Bernstein and Nanongkai~\cite{BernsteinN19} provided a $\widetilde O(n)$-round algorithm computing the exact APSP on any weighted graph.
As a result, computing unweighted diameter/radius and weighted diameter/radius (exactly or with a small approximation ratio) have an almost tight complexity of $\widetilde\Theta(n)$ in the classical CONGEST network.
If a larger approximation ratio is allowed, there are $\widetilde O(\sqrt n+D)$-round algorithms for $3/2$-approximating the diameter/radius on any unweighted graph \cite{HolzerPRW14,AnconaCDEW20}.
Besides, Chechik and Mukhtar~\cite{ChechikM20} showed a $\widetilde O(\sqrt nD^{1/4}+D)$-round algorithm computing Single-Source Shortest Paths (SSSP) exactly on any weighted graph, which also gives a $2$-approximation of the diameter/radius.

As for the quantum setting, while quantum computation offers advantages over classical computation in various settings such as query complexity and two-party communication complexity, the power of quantum computation in distributed computing has not been fully explored.
In the quantum CONGEST network, Elkin et al.~\cite{ElkinKNP14} gave negative results for several problems such as minimum spanning tree, minimum cut, and SSSP, i.e., quantum communication does not speed up distributed algorithms for these problems.
Le~Gall and Magniez~\cite{GallM18} presented a $\widetilde O\left(\sqrt{nD}\right)$-round algorithm computing the diameter/radius on any unweighted graph, along with a $\widetilde O\left(\sqrt[3]{nD}+D\right)$-round algorithm $3/2$-approximating the diameter.
They also proved a $\widetilde\Omega(\sqrt n+D)$-lower bound for computing the unweighted diameter, which was later improved to $\widetilde\Omega\left(\sqrt[3]{nD^2}+\sqrt n\right)$ by Magniez and Nayak~\cite{MagniezN20}.
The above results are listed on Table~\ref{tab:complexity}.

\begin{table}[t]
\centering
\scriptsize
\renewcommand\arraystretch{1.5}
\caption{Complexity of computing diameter and radius in the CONGEST model.}
\label{tab:complexity}
\begin{tabular}{cccccccc}
\toprule[1.5pt]
\multirow{2}{*}{Problem} & \multirow{2}{*}{Variant} & \multirow{2}{*}{Approx.} & \multicolumn{2}{c}{Upper bound $\widetilde O(\cdot)$} & ~ & \multicolumn{2}{c}{Lower bound $\widetilde\Omega(\cdot)$} \\
\cline{4-5} \cline{7-8}
~ & ~ & ~ & Classical & Quantum & ~ & Classical & Quantum \\
\midrule[1.5pt]
\multirow{7}{*}{diameter} & ~ & exact & $n$~\cite{HolzerW12,PelegRT12} & $\sqrt{nD}$~\cite{GallM18} & ~ & $n$~\cite{FrischknechtHW12} & $\sqrt[3]{nD^2}+\sqrt n$~\cite{MagniezN20} \\
~ & ~ & $3/2-\varepsilon$ & $n$ & $\sqrt{nD}$ & ~ & $n$~\cite{AbboudCK16} & $\sqrt n+D$~\cite{GallM18} \\
~ & ~ & $3/2$ & $\sqrt n+D$~\cite{HolzerPRW14,AnconaCDEW20} & $\sqrt[3]{nD}+D$~\cite{GallM18} & ~ & open & open \\
~ & weighted & exact & $n$~\cite{BernsteinN19} & $n$ & ~ & $n$ & $n^{2/3}$ \\
~ & weighted & $(1,3/2)$ & $n$ & \textcolor{red}{$\min\left\{n^{9/10}D^{3/10},n\right\}$ (This work)} & ~ & $n$ & \textcolor{red}{$n^{2/3}$ (This work)} \\
~ & weighted & $2-\varepsilon$ & $n$ & $\min\left\{n^{9/10}D^{3/10},n\right\}$ & ~ & $n$~\cite{HolzerP15} & $\sqrt n+D$ \\
~ & weighted & $2$ & $\sqrt nD^{1/4}+D$~\cite{ChechikM20} & $\sqrt nD^{1/4}+D$ & ~ & open & open \\
\midrule
\multirow{6}{*}{radius} & ~ & exact & $n$~\cite{HolzerW12,PelegRT12} & $\sqrt{nD}$ & ~ & $n$ & $\sqrt[3]{nD^2}+\sqrt n$ \\
~ & ~ & $3/2-\varepsilon$ & $n$ & $\sqrt{nD}$ & ~ & $n$~\cite{AbboudCK16} & $\sqrt n+D$ \\
~ & ~ & $3/2$ & $\sqrt n+D$~\cite{AnconaCDEW20} & $\sqrt n+D$ & ~ & open & open \\
~ & weighted & exact & $n$~\cite{BernsteinN19} & $n$ & ~ & $n$ & $n^{2/3}$ \\
~ & weighted & $(1,3/2)$ & $n$ & \textcolor{red}{$\min\left\{n^{9/10}D^{3/10},n\right\}$ (This work)} & ~ & $n$ & \textcolor{red}{$n^{2/3}$ (This work)} \\
~ & weighted & $2$ & $\sqrt nD^{1/4}+D$~\cite{ChechikM20} & $\sqrt nD^{1/4}+D$ & ~ & open & open \\
\bottomrule[1.5pt]
\end{tabular}
\end{table} 

\section{Preliminaries}

\subsection{Graph Notations}
\label{sec:graph_notations}

Given a weighted graph $(G,w)$ where $G=(V,E)$ and $w:E\to\mathbb N^+$.
The length of a path is defined to be the sum of weights of edges on it, and the distance between nodes $u$ and $v$, denoted by $d_{G,w}(u,v)$, is the least length over all paths between them.
The eccentricity of a node $u$ is denoted by $e_{G,w}(u)=\max_{v\in V}d_{G,w}(u,v)$.
The radius of weighted graph $(G,w)$, denoted by $R_{G,w}$, is the minimum eccentricity over all nodes, i.e., $R_{G,w}=\min_{u\in V}e_{G,w}(u)$, while the diameter of $(G,w)$, denoted by $D_{G,w}$, is the maximum eccentricity of nodes, or equally, the maximum distance between any two nodes, i.e., $D_{G,w}=\max_{u\in V}e_{G,w}(u)=\max_{u,v\in V}d_{G,w}(u,v)$.
The unweighted diameter of graph $G$ is denoted by $D_G=D_{G,w^\star}$ where $w^\star(e)=1$ for all $e\in E$, which is an essential parameter when $G$ represents the underlying graph of a distributed network.

\subsection{CONGEST Model}

In the classical CONGEST model, the communication network is a graph $G=(V,E)$ with $n$ nodes, and every node is assigned with a unique identifier.
Each node represents a processor with unlimited computational power, i.e., the consumption of any local computation in a single processor is ignored.
Each edge connecting two nodes represents a communication channel with $B=O(\log n)$ bits of bandwidth.
In this article, we further consider the weighted graph $(G,w)$ as underlying network, where the weight of each edge is initially known to both of its endpoints.

For quantum version of the CONGEST model defined in \cite{ElkinKNP14}, adjacent nodes are allowed to exchange \textit{qubits} (quantum bits), i.e., the classical channels are now quantum channels with the same bandwidth $B=O(\log n)$.
Each node can locally do some quantum computation, and distinct nodes may own qubits with entanglement.
In this paper, we assume that initially all nodes do not share any entanglement, but the nodes can, for example, locally create a pair of entangled qubits, and send one to others.

For both classical and quantum CONGEST models, the algorithm is implemented round by round in a synchronous manner.
In each round, each node sends/receives a message of $O(\log n)$ (qu)bits to/from each neighbor, and then does local computation according to local knowledge.
The algorithm halts when all nodes halt, and at the end of the algorithm, each node has its own output. We say an algorithm computes the diameter/radius if all nodes output the correct answer.
The round complexity of an algorithm in this model is defined to be the number of communication rounds needed.
And the round complexity of a distributed problem is the least round complexity of any algorithm solving it.
Our focus here is the distance problems, mainly the computation of diameter and radius mentioned in Section~\ref{sec:graph_notations}.

\subsection{Server Model}

The Sever model  is a variant of the two-party communication model, which was introduced by Elkin et al.~\cite{ElkinKNP14} to prove lower bounds in the CONGEST model.
There are three players in the Server model: Alice, Bob, and the server.
Alice and Bob receive the inputs $x$ and $y$ respectively, and want to compute $F(x,y)$ for some function $F$.
The server receives no input.
Alice and Bob can exchange messages with the server.
The catch here is that the server can send messages for free.
Thus, the communication complexity counts only messages sent by Alice and Bob.
Note that Alice and Bob can talk to each other by considering the server as a communication channel, so any protocol in the traditional two-party communication model can be implemented in the Server model with the same complexity.

For a two-argument function $F$ and $0\le\varepsilon<1$, we let $Q^{sv}_\varepsilon(F)$ denote the communication complexity (in the quantum setting) of computing $F$ where for any inputs $x,y$, the algorithm must output $F(x,y)$ with probability at least $1-\varepsilon$.
For Boolean function $f:\{0,1\}^n\rightarrow\mathbb{R}$ and $0\le\varepsilon<1$, the $\varepsilon$-approximate degree of $f$, denoted by $\text{deg}_\varepsilon(f)$, is the smallest degree of any polynomial $p$ that $\varepsilon$-represents $f$, i.e., $|p(x)-f(x)|\le\varepsilon$ for any input $x\in\{0,1\}^n$.

\section{Algorithm}

We first introduce the framework of distributed quantum optimization in \cite{GallM18}.
Given function  $f:X\to\mathbb Z$, where $X$ is a finite set, let $G=(V,E)$ be a network with a pre-defined node $\text{leader}\in V$.
We write $\ket\phi_v$ to denote a state in the memory space of node $v$.
A specific register $\ket\cdot_I$ called \textit{internal} and the control of the algorithm are centralized by the node leader.
Assume that the following three quantum procedures are given as black boxes.
\begin{itemize}
\item {\bf Initialization:} Prepare an initial state $\ket0_I\ket{\text{init}}$ with some pre-computed information $\ket{\text{init}}$.
\item {\bf Setup:} Produce a superposition from the initial state:
$$\ket0_I\ket{\text{init}}\mapsto\sum_{x\in X}\alpha_x\ket x_I\ket{\text{data}(x)}\ket{\text{init}},$$
where the $\alpha_x$'s are arbitrary amplitudes and $\text{data}(x)$ are information depending on $x$.
\item {\bf Evaluation:} Perform the transformation
$$\ket{x,0}_I\ket{\text{data}(x)}\ket{\text{init}}\mapsto\ket{x,f(x)}_I\ket{\text{data}(x)}\ket{\text{init}}.$$
\end{itemize}

The following lemma provides an algorithm to search $x\in X$ with high value $f(x)$ given the three procedures above.

\begin{lemma}[Theorem 2.4 in \cite{GallM18}\footnote{Although Le~Gall and Magniez write a slightly weaker statement, the lemma we claim here can be proven by the same argument in \cite{GallM18}.}]
Assume that {\rm Initialization} can be implemented within $T_0$ rounds in the quantum CONGEST model, and that unitary operators {\rm Setup} and {\rm Evaluation} and their inverses can be implemented within $T$ rounds.
Let $\rho>0$ be such that $\sum_{x\in X:f(x)\ge M}|\alpha_x|^2\ge\rho$ where $M$ is unknown to all nodes.
Then, for any $\delta>0$, the node {\rm leader} can find, with probability at least $1-\delta$, some element $x$ such that $f(x)\ge M$, in $T_0+O\left(\sqrt{\log(1/\delta)/\rho}\right)\times T$ rounds.
\label{lem:optimization}
\end{lemma}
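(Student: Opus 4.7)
The plan is to prove the lemma as a distributed implementation of the D\"urr--H{\o}yer quantum maximum-finding algorithm combined with the Boyer--Brassard--H{\o}yer--Tapp (BBHT) trick for quantum search with an unknown number of marked items. At its core this lemma is just amplitude amplification applied to the superposition prepared by running Setup followed by Evaluation: the ``good'' subspace consists of basis states $\ket{x,f(x)}_I\ket{\text{data}(x)}\ket{\text{init}}$ with $f(x)\ge M$, and by hypothesis has total weight at least $\rho$. The only twist is that the threshold $M$ is unknown, which I will deal with by maintaining an adaptive threshold $t$ at the leader that only moves upward as higher-valued samples are discovered.

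After running Initialization once at cost $T_0$, write $A=\mathrm{Evaluation}\circ\mathrm{Setup}$. For any current threshold $t$ stored locally at the leader, define the Grover iterate
\[
G_t \;=\; -\,A\,S_0\,A^{-1}\,O_t,
\]
where $O_t$ phase-flips those basis states whose $\ket{f(x)}_I$-component satisfies $f(x)\ge t$, and $S_0$ phase-flips $\ket{0}_I\ket{\text{init}}$. Because the internal register $\ket{\cdot}_I$ is by assumption entirely held by the leader and $\ket{\text{init}}$ is a fixed pre-computed reference state, both $O_t$ and $S_0$ are controlled on leader-local data and cost zero rounds; hence a single application of $G_t$ takes $O(T)$ rounds.

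To handle the unknown threshold $M$, I first initialize $t$ by applying $A$ to $\ket{0}_I\ket{\text{init}}$ and measuring the internal register, obtaining some $(x_0,f(x_0))$ and setting $t\leftarrow f(x_0)$. Then I iterate: choose $k$ uniformly from $\{0,1,\ldots,\lceil c\rceil\}$ for a parameter $c$ that grows geometrically across epochs, as in BBHT; apply $G_t^k A$ to $\ket{0}_I\ket{\text{init}}$; measure the internal register to obtain $(x',f(x'))$; and update $t\leftarrow\max\{t,f(x')\}$. Since $\sum_{x:f(x)\ge M}|\alpha_x|^2\ge\rho$, a standard amplitude-amplification analysis shows that whenever $t<M$ each attempt already returns some $x'$ with $f(x')\ge t$ (hence strictly improves $t$ or directly achieves $f(x')\ge M$) with constant probability, so $O(\log(1/\delta))$ successful epochs push $t\ge M$ with probability at least $1-\delta$. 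Summing the iteration counts across epochs gives total complexity $T_0+O\bigl(\sqrt{\log(1/\delta)/\rho}\bigr)\cdot T$ rounds, as claimed.

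The principal obstacle I foresee is arguing rigorously that $S_0$ and $O_t$ admit genuinely zero-communication implementations in the distributed setting; this rests entirely on the modelling convention that the internal register and the control flow are centralized at the leader, so both reflections reduce to single-processor local operations with no rounds charged. A secondary technical point is justifying the adaptive-threshold analysis under the non-uniform amplitudes $\alpha_x$, but the D\"urr--H{\o}yer/BBHT argument transfers without change because the distributed implementation of $G_t$ affects only the per-iteration round cost and not the query-count analysis; this is presumably why the authors state that the slightly strengthened form asserted here follows from the same argument as in \cite{GallM18}.
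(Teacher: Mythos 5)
Your overall strategy coincides with the one the paper actually relies on: the paper gives no self-contained proof of this lemma, deferring to Theorem~2.4 of \cite{GallM18}, and that proof is precisely the distributed D\"urr--H{\o}yer/amplitude-amplification argument you describe, with the key observation that both reflections act only on the internal register held by the leader and therefore cost no communication, so each Grover iterate is dominated by one application of $A=\mathrm{Evaluation}\circ\mathrm{Setup}$ and one of $A^{-1}$, i.e., $O(T)$ rounds. One point you wave at but should make explicit: reflecting about $\ket{0}_I$ alone, rather than about the full initial state including the distributed data registers, is legitimate only because the entire evolution stays inside the two-dimensional subspace spanned by the good and bad components of $A\ket{0}_I\ket{\mathrm{init}}$; that invariant-subspace argument is the real content of the reduction to leader-local reflections, not merely a ``modelling convention.''

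There are two places where your quantitative analysis does not deliver what you claim. First, the progress measure is off: when $t<M$, a successful BBHT epoch returns a sample from $\{x:f(x)>t\}$ weighted by $|\alpha_x|^2$, which need not land in $\{x:f(x)\ge M\}$ with constant probability, and ``strictly improves $t$'' does not bound the number of epochs since $t$ could creep through many intermediate values. The correct D\"urr--H{\o}yer argument tracks the mass $p_t=\sum_{x:f(x)>t}|\alpha_x|^2$, shows it halves in expectation per successful epoch, and concludes that the geometrically growing epoch lengths sum to an expected $O(1/\sqrt{\rho})$ iterations before $p_t<\rho$, i.e., before $t\ge M$. Second, and more importantly, boosting a constant-success procedure by $O(\log(1/\delta))$ independent repetitions yields $O\bigl(\log(1/\delta)\cdot\rho^{-1/2}\bigr)$ iterations, not the claimed $O\bigl(\sqrt{\log(1/\delta)/\rho}\bigr)$; the $\sqrt{\log(1/\delta)}$ dependence in the statement is genuinely stronger and requires the small-error search technique of Buhrman, Cleve, de~Wolf and Zalka (error $\delta$ with $\Theta\bigl(\sqrt{N\log(1/\delta)}\bigr)$ queries), namely a single amplification run of length $\Theta\bigl(\sqrt{\log(1/\delta)/\rho}\bigr)$ with a suitably randomized iteration count rather than independent restarts. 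For the application in this paper ($\delta=1/\mathrm{poly}(n)$ and all bounds stated up to $\widetilde O$) the two expressions coincide, but as a proof of the lemma as stated your accounting leaves a gap exactly where the footnoted strengthening over \cite{GallM18} lives.
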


The three procedures will be described as deterministic or randomized procedures that combine the subroutines provided by Nanongkai~\cite{Nanongkai14STOC} (also presented in Appendix~\ref{sec:toolkits}).
They can be quantized using the standard technique \cite{Bennett89}, with potentially additional garbage whose size is of the same order as the initial memory space.

Given a weighted graph $(G,w)$ where $G=(V,E)$ is a network and $w:E\to\mathbb N^+$, we show a quantum algorithm approximating $D_{G,w}$ and $R_{G,w}$ by proving Theorem~\ref{thm:upper_bound}.
We only show the algorithm approximating the diameter.
The proof for radius is basically the same except that it finds the minimum (approximate) eccentricity instead of the maximum one.

We choose the parameters throughout this section.
\begin{equation}
\label{eqn:alg_parameters}
\varepsilon=1/\log n,r=n^{2/5}D_G^{-1/5},\ell=n\log n/r,k=\sqrt{D_G}.
\end{equation}
As mentioned in Section~\ref{sec:results}, finding a node with maximum eccentricity among all nodes by directly applying a quantum search algorithm can hardly be done in $o(n)$ rounds.
We instead try to find a vertex set containing a node with maximum approximate eccentricity among $n$ vertex sets $S_1,\cdots,S_n$, and then search such a node in this vertex set.
Each set $S_i$ for $i\in[1,n]$ is sampled by having each node $v\in V$ join it independely with probability $r/n$.
For such a random set and a node $s$ in it, Nanongkai showed in \cite{Nanongkai14STOC} an efficient classical procedure to approximate its eccentricity (actually every node $v\in V$ can know an approximation of the distance from $s$ to $v$).

\subsection{Computation of Approximate Eccentricity}
\label{sec:approx_ecc}

For convenience, we need to introduce several graph notations.
Given a weighted graph $(G,w)$, the hop distance between nodes $u$ and $v$, denoted by $h_{G,w}(u,v)$, is the minimum number of edges over all shortest paths between them.
The hop diameter of the weighted graph, denoted by $H_{G,w}$, is the maximum hop distance between any two nodes.
For $\ell>0$, the $\ell$-hop distance between $u$ and $v$, denoted by $d^\ell_{G,w}(u,v)$, is the least length over all paths between them containing at most $\ell$ edges.
Note that $d^\ell_{G,w}(u,v)=d_{G,w}(u,v)$ when $h_{G,w}(u,v)\le\ell$.

In general, Nanongkai~\cite{Nanongkai14STOC} would approximate the bounded-hop distance, and sample a random set of key nodes as skeleton.
Then it could approximate the distance from any key node $s$ to any node $v$ since, with high probability, any shortest path from $s$ to $v$ can be partitioned into bounded-hop shortest paths between key nodes, along with a tail path from some key node to $v$, as long as the number of key nodes is sufficiently large.

Here we only list the necessary definitions of approximate bounded-hop distance, approximate distance, and approximate eccentricity.
We claim that these are good approximations.
The algorithms evaluating these quantities are presented in Appendix~\ref{sec:toolkits}, and the detailed proof should be found in \cite[arXiv version]{Nanongkai14STOC}.
Note that we are given a weighted graph $(G,w)$ where $G=(V,E)$ and $w:E\to\mathbb N^+$.

\begin{lemma}[Theorem 3.3 in \cite{Nanongkai14STOC}]
Given an integer $\ell>0$.
For integer $i\ge0$, define $w_i:E\to\mathbb N^+$ where $w_i(e)=\left\lceil\frac{2\ell w(e)}{\varepsilon\cdot2^i}\right\rceil$ for $e\in E$.
For any $u,v\in V$, the approximate bounded-hop distance is defined as
$$\widetilde d^\ell_{G,w}(u,v)=\min_i\left\{d_{G,w_i}(u,v)\cdot\frac{\varepsilon\cdot2^i}{2\ell}:d_{G,w_i}(u,v)\le\left(1+\frac2\varepsilon\right)\ell\right\}.$$
Then $d_{G,w}(u,v)\le\widetilde d^\ell_{G,w}(u,v)\le(1+\varepsilon)d^\ell_{G,w}(u,v)$.
\label{lem:bounded_hop_distance}
\end{lemma}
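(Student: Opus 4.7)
The plan is to prove both inequalities by straightforward bookkeeping of the ceiling rounding $w_i(e)=\lceil \frac{2\ell w(e)}{\varepsilon\cdot 2^i}\rceil$, with the only non-trivial ingredient being a dyadic choice of the index $i$ in the upper bound.

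\textbf{Lower bound} $d_{G,w}(u,v)\le\widetilde d^\ell_{G,w}(u,v)$. First I would use $\lceil x\rceil\ge x$ edge-wise to obtain $w_i(e)\cdot\frac{\varepsilon\cdot 2^i}{2\ell}\ge w(e)$ for every $e\in E$ and every $i\ge 0$. Summing over the edges of any $u$-$v$ path then shows that its rescaled $w_i$-length is at least its $w$-length, hence $d_{G,w_i}(u,v)\cdot\frac{\varepsilon\cdot 2^i}{2\ell}\ge d_{G,w}(u,v)$ for every $i$ without any constraint. Taking the minimum over the feasible indices in the definition of $\widetilde d^\ell_{G,w}(u,v)$ yields the lower bound.

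\textbf{Upper bound} $\widetilde d^\ell_{G,w}(u,v)\le(1+\varepsilon)\,d^\ell_{G,w}(u,v)$. Let $P^\star$ be an $\ell$-hop shortest $u$-$v$ path of $w$-length $L:=d^\ell_{G,w}(u,v)$, so $|P^\star|\le\ell$. Using $\lceil x\rceil\le x+1$ on each edge, the $w_i$-length of $P^\star$ is at most $\frac{2\ell L}{\varepsilon\cdot 2^i}+|P^\star|\le\frac{2\ell L}{\varepsilon\cdot 2^i}+\ell$. I would then pick the smallest $i\ge 0$ with $2^i\ge L$; this choice satisfies $2^i<2L$ (assuming $L\ge 1$, which holds whenever $u\ne v$ since $w$ takes positive integer values) and gives $\frac{2\ell L}{\varepsilon\cdot 2^i}+\ell\le\frac{2\ell}{\varepsilon}+\ell=\bigl(1+\tfrac{2}{\varepsilon}\bigr)\ell$, certifying that $i$ is feasible in the definition of $\widetilde d^\ell_{G,w}$. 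Rescaling yields $d_{G,w_i}(u,v)\cdot\frac{\varepsilon\cdot 2^i}{2\ell}\le L+\frac{\varepsilon\cdot 2^i}{2}<L+\varepsilon L=(1+\varepsilon)L$, as required. The degenerate case $u=v$ (so $L=0$) is handled by $i=0$, for which $d_{G,w_0}(u,v)=0$ trivially.

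The only real choice in the argument is the dyadic selection of $i$, which has to simultaneously control the additive rounding overhead (the $+\ell$ term, which forces the feasibility threshold $(1+2/\varepsilon)\ell$) and the multiplicative error $\varepsilon L$ after rescaling. Picking $2^i$ to match $L$ up to a factor of two balances these two requirements, and after that the estimate follows entirely from the pointwise sandwich $w_i(e)\in\bigl[\frac{2\ell w(e)}{\varepsilon\cdot 2^i},\frac{2\ell w(e)}{\varepsilon\cdot 2^i}+1\bigr]$. I do not anticipate any further obstacle.
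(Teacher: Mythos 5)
Your proof is correct. The paper does not prove this lemma at all---it imports it verbatim as Theorem~3.3 of Nanongkai's paper---and your argument (rounding edge weights up so the rescaled $w_i$-length dominates the $w$-length, bounding the total ceiling overhead by $\ell$ via the hop bound, and choosing the dyadic scale with $2^i\in[L,2L)$ so that feasibility $d_{G,w_i}(u,v)\le(1+\tfrac{2}{\varepsilon})\ell$ holds while the rescaled overhead $\tfrac{\varepsilon\cdot 2^i}{2}$ stays below $\varepsilon L$) is exactly the standard weight-rounding argument used there, including the correct handling of the degenerate case $u=v$.
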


\begin{lemma}[Theorem 4.2 in \cite{Nanongkai14STOC}]
Given a vertex set $S\subseteq V$.
Let the weighted complete graph $\left(G'_S,w'_S\right)$ be such that
$$
\begin{aligned}
& G'_S=\left(S,\tbinom{S}{2}\right),w'_S:\tbinom{S}{2}\to\mathbb N^+, \\
& w'_S(\{u,v\})=\widetilde d^\ell_{G,w}(u,v),\forall\{u,v\}\in \tbinom{S}{2}.
\end{aligned}
$$
For node $v\in S$, let $N^k_S(v)$ be the set of the $k$ nodes with the least distance from $v$ on $(G'_S,w'_S)$.
And let the weighted complete graph $\left(G''_S,w''_S\right)$ be such that
$$
\begin{aligned}
& G''_S=\left(S,\tbinom{S}{2}\right),w''_S:\tbinom{S}{2}\to\mathbb N^+, \\
& w''_S(\{u,v\})=
\begin{cases}
d_{G'_S,w'_S}(u,v),	& u\in N^k_S(v)\text{ or }v\in N^k_S(u) \\
w'_S(\{u,v\}), 	& \text{otherwise}
\end{cases},\forall\{u,v\}\in \tbinom{S}{2}.
\end{aligned}
$$
For any $s\in S$ and $v\in V$, the approximate distance is defined as
$$\widetilde d_{G,w,S}(s,v)=\min_{u\in S}\left\{\widetilde d^{4|S|/k}_{G''_S,w''_S}(s,u)+\widetilde d^\ell_{G,w}(u,v)\right\}.$$
If $\ell=n\log n/r$ and $S$ is sampled by having each node $v\in V$ join it independetly with probability $r/n$, then $d_{G,w}(s,v)\le\widetilde d_{G,w,S}(s,v)\le(1+\varepsilon)^2d_{G,w}(s,v)$ for all $s\in S$ and $v\in V$, with probability at least $1-2^{-cn}$, for some constant $c>0$ and sufficiently large $n$.
\label{lem:approx_distance}
\end{lemma}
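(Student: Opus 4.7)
The plan is to split into the two inequalities. The left inequality $d_{G,w}(s,v)\le\widetilde d_{G,w,S}(s,v)$ is essentially the triangle inequality: every edge weight in $G'_S$ and in $G''_S$ is at least the corresponding true distance in $G$ (using Lemma~\ref{lem:bounded_hop_distance} for the $\widetilde d^\ell$ edges and the definition of $d_{G'_S,w'_S}$ for the $k$-nearest-neighbor replacements), so any path realizing $\widetilde d_{G,w,S}(s,v)$ unfolds into a walk in $G$ of at least the same length.

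For the upper bound, I would first establish a hitting-set property for $S$. Since $\Expect|S|=r$ and $\ell=n\log n/r$, any fixed $\ell$ consecutive nodes on a shortest path contain a sample with probability at least $1-(1-r/n)^\ell\ge 1-2^{-\Omega(\log^2 n)}$; a union bound over the at most $n^3$ windows on the $\binom n2$ shortest paths gives the $1-2^{-cn}$ probability guarantee. Fixing $s\in S$ and $v\in V$, pick the shortest $s$-$v$ path $P$ and choose sampled nodes $s=u_0,u_1,\ldots,u_m$ along $P$ with each pair $(u_{j-1},u_j)$ within $\ell$ hops, plus a final $\le\ell$-hop tail from $u_m$ to $v$. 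By Lemma~\ref{lem:bounded_hop_distance} each segment satisfies $\widetilde d^\ell_{G,w}(u_{j-1},u_j)\le(1+\varepsilon)\,d_{G,w}(u_{j-1},u_j)$, and the same holds for the tail, so concatenating yields a path in $G'_S$ of total weight at most $(1+\varepsilon)\,d_{G,w}(s,u_m)$ and a tail $\widetilde d^\ell_{G,w}(u_m,v)\le(1+\varepsilon)\,d_{G,w}(u_m,v)$.

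The remaining task is to bound $\widetilde d^{4|S|/k}_{G''_S,w''_S}(s,u_m)$ by $(1+\varepsilon)$ times the weight of this $G'_S$ path, which requires that some near-shortest $s$-$u_m$ path in $G''_S$ uses at most $4|S|/k$ hops. This is the standard skeleton-graph hop-reduction argument: any long path in $G'_S$ can be ``shortcut'' by replacing runs between nodes that lie in each other's $N^k_S$-neighborhoods with the single $w''_S$-edge equal to the true $G'_S$-distance, which is the real reason $G''_S$ is defined. A Chernoff-type counting argument on the sampled $S$ shows that in any shortest $s$-$u_m$ path in $G''_S$, the number of hops that jump outside the $k$-nearest-neighbor shortcut is $O(|S|/k)$ with high probability, absorbing into the $1-2^{-cn}$ event. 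Applying Lemma~\ref{lem:bounded_hop_distance} to $(G''_S,w''_S)$ with hop budget $4|S|/k$ contributes the second $(1+\varepsilon)$ factor, giving $(1+\varepsilon)^2 d_{G,w}(s,v)$ after taking $u=u_m$ in the minimum.

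The main obstacle is precisely this hop-reduction step: controlling how the $k$-nearest-neighbor shortcuts interact with the random set $S$ to guarantee an $O(|S|/k)$ hop bound simultaneously for all source-target pairs in $S\times V$. Everything else is bookkeeping using Lemma~\ref{lem:bounded_hop_distance} and union bounds, but this combinatorial fact is what ties together the choices of $r$, $\ell$, and $k$ in~\eqref{eqn:alg_parameters} and ultimately makes the skeleton small enough to be useful.
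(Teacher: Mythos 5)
Your overall decomposition does match the justification the paper gives (which is itself only a sketch deferring to Nanongkai): the lower bound holds because every edge weight of $G'_S$ and $G''_S$ dominates the corresponding true distance in $(G,w)$, and the upper bound comes from splitting a shortest $s$--$v$ path at sampled nodes into $\le\ell$-hop segments, paying one factor of $(1+\varepsilon)$ via Lemma~\ref{lem:bounded_hop_distance} on each segment and a second factor via Lemma~\ref{lem:bounded_hop_distance} applied to the hop-bounded distance on $(G''_S,w''_S)$. However, there is a genuine gap at exactly the step you flag as the main obstacle. The bound on the number of hops needed in $G''_S$ is \emph{not} a probabilistic statement about the random set $S$, and no ``Chernoff-type counting argument on the sampled $S$'' enters: the paper invokes Theorem 3.10 of Nanongkai, which states that the $k$-shortcut graph $(G''_S,w''_S)$ of \emph{any} weighted graph $(G'_S,w'_S)$ deterministically satisfies $H_{G''_S,w''_S}<4|S|/k$. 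The proof is a purely combinatorial charging argument showing that consecutive hops of a shortest path in $G''_S$ must consume disjoint batches of $\Omega(k)$ nodes of $S$ (otherwise a $w''_S$-shortcut edge would have been available), so the path has fewer than $4|S|/k$ edges. By attributing this to the randomness of $S$ and ``absorbing it into the $1-2^{-cn}$ event,'' you leave the crucial fact unproven and conflate it with the hitting-set property, which is the only place randomness actually appears.

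Separately, your probability accounting does not go through. With $\ell=n\log n/r$, the chance that a window of $\ell$ consecutive nodes on a path misses $S$ is $(1-r/n)^{\ell}\approx e^{-\log n}=n^{-\Theta(1)}$, not $2^{-\Omega(\log^2 n)}$, so a union bound over polynomially many windows yields at best a $1-1/\text{poly}(n)$ guarantee (with the hidden constant in $\ell$ chosen large enough), and certainly nothing close to $1-2^{-cn}$. The paper sidesteps both issues by citing Nanongkai's Lemma 4.3 for the path decomposition and Theorem 3.10 for the hop bound rather than reproving them; a self-contained proof would need to supply the deterministic shortcut-graph argument and redo the hitting-set union bound honestly.
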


\noindent
{\bf Remark.} We briefly explain why $\widetilde d_{G,w,S}(\cdot)$ is a good approximation.
By the choice of $\ell$ and $S$, Lemma 4.3 in \cite{Nanongkai14STOC} says that, with high probability, any $s\in S,v\in V$ and shortest path $\left(s\leadsto v\right)$ on $(G,w)$ is of the form $\left(s=s_1\leadsto\cdots\leadsto s_m=u\leadsto v\right)$ such that $s_i\in S$ for $i\in[1,m]$, $h_{G,w}(s_{i-1},s_i)\le\ell$ for $i\in[2,m]$, and $h_{G,w}(u,v)\le\ell$.
Apparently $\widetilde d_{G,w,S}\ge d_{G,w}(s,v)$.
On the other side,
\begin{equation*}
\begin{split}
\widetilde d_{G,w,S}(s,v) & =\widetilde d^{4|S|/k}_{G''_S,w''_S}(s,u)+\widetilde d^\ell_{G,w}(u,v) \\
& \le(1+\varepsilon)d^{4|S|/k}_{G''_S,w''_S}(s,u)+\widetilde d^\ell_{G,w}(u,v) \\
& =(1+\varepsilon)d_{G''_S,w''_S}(s,u)+\widetilde d^\ell_{G,w}(u,v) \\
& \le(1+\varepsilon)\sum_{i=2}^mw'_S(\{s_{i-1},s_i\})+\widetilde d^\ell_{G,w}(u,v) \\
& \le(1+\varepsilon)\sum_{i=2}^m\widetilde d^\ell_{G,w}(s_{i-1},s_i)+\widetilde d^\ell_{G,w}(u,v) \\
& \le(1+\varepsilon)^2\left(\sum_{i=2}^md^\ell_{G,w}(s_{i-1},s_i)+d^\ell_{G,w}(u,v)\right) \\
& =(1+\varepsilon)^2\left(\sum_{i=2}^md_{G,w}(s_{i-1},s_i)+d_{G,w}(u,v)\right) \\
& =(1+\varepsilon)^2d_{G,w}(s,v).
\end{split}
\end{equation*}
The second and sixth lines are due to Lemma~\ref{lem:bounded_hop_distance}.
The third line is due to Theorem 3.10 in \cite{Nanongkai14STOC}, which says that $H_{G''_S,w''_S}<4|S|/k$ since $(G''_S,w''_S)$ is the {\it $k$-shortcut graph} of $(G'_S,w'_S)$.

\bigskip

For $i\in[1,n]$, we rewrite $G''_{S_i},w''_{S_i},\widetilde d_{G,w,S_i}(\cdot)$ as $G''_i,w''_i,\widetilde d_{G,w,i}(\cdot)$ for short.
For any $s\in S_i$, the approximate eccentricity is defined as $\widetilde e_{G,w,i}(s)=\max_{v\in V}\widetilde d_{G,w,i}(s,v)$.
Define two good events:
\begin{itemize}
\item {\bf Good-Scale:} For all $i\in[1,n]$, $|S_i|=\Theta(r)$.
Besides, let $v^\star\in V$ be a node with maximum eccentricity, i.e., $e_{G,w}(v^\star)=D_{G,w}$, then $v^\star$ joins $\beta=\Theta(r)$ sets $S_{i_1},\cdots,S_{i_\beta}$.
\item {\bf Good-Approximation:} For all $i\in[1,n]$ and $s\in S_i,v\in V$, $d_{G,w}(s,v)\le\widetilde d_{G,w,i}(s,v)\le(1+\varepsilon)^2d_{G,w}(s,v)$, thus $e_{G,w}(s)\le\widetilde e_{G,w,i}(s)\le(1+\varepsilon)^2e_{G,w}(s)$.
\end{itemize}
By Chernoff inequality and a union bound, the event Good-Scale occurs with probability at least $1-1/\text{poly}(n)$.
By Lemma~\ref{lem:approx_distance} 
and a union bound, the event Good-Approximation occurs with probability at least $1-1/\text{poly}(n)$.
Therefore, we can assume that the two events all happen in the following context.

\subsection{Quantization}

For each $i\in[1,n]$, we define $f_i:S_i\to\mathbb Z$ where $f_i(s)=\widetilde e_{G,w,i}(s)$ for $s\in S_i$, and  $f:[1,n]\to\mathbb Z$ where $f(i)=\max_{s\in S_i}f_i(s)$ for $i\in[1,n]$.

\begin{lemma}
The number of $i\in[1,n]$ satisfying $f(i)\ge D_{G,w}$ is $\Theta(r)$.
Moreover, $f(i)\le(1+\varepsilon)^2D_{G,w}$  for all $i\in[1,n]$.
\label{lem:approximation}
\end{lemma}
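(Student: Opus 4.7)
The plan is to derive both assertions directly from the two high-probability events Good-Scale and Good-Approximation, which the paragraph preceding the lemma permits us to assume hold. The whole proof is unpacking the definitions of $f_i$ and $f$ and plugging in the two-sided approximation guarantee, with the counting part anchored on a distinguished witness supplied by Good-Scale, namely a node $v^\star$ of maximum eccentricity.

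For the ``Moreover'' statement I would fix any $i \in [1,n]$ and any $s \in S_i$, and chain
$f_i(s) = \widetilde e_{G,w,i}(s) \le (1+\varepsilon)^2 e_{G,w}(s) \le (1+\varepsilon)^2 D_{G,w}$,
where the first inequality is the upper half of Good-Approximation and the second is the definition of the diameter. Taking the maximum over $s \in S_i$ gives $f(i) \le (1+\varepsilon)^2 D_{G,w}$, as required.

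For the counting claim I would use Good-Scale to pin down $\beta = \Theta(r)$ distinct indices $i_1,\dots,i_\beta$ with $v^\star \in S_{i_j}$. Applying the lower half of Good-Approximation at the point $v^\star$ for each such $i_j$ gives $\widetilde e_{G,w,i_j}(v^\star) \ge e_{G,w}(v^\star) = D_{G,w}$, hence $f(i_j) \ge f_{i_j}(v^\star) \ge D_{G,w}$. This immediately produces $\Omega(r)$ qualifying indices, which is precisely the lower bound on the ``good'' probability mass $\rho$ required later when invoking Lemma~\ref{lem:optimization} with the uniform superposition over $[1,n]$.

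The only part that I expect to require real care is the matching $O(r)$ side of $\Theta(r)$. Any index $i$ with $f(i) \ge D_{G,w}$ forces $S_i$ to contain some $s$ with $e_{G,w}(s) \ge D_{G,w}/(1+\varepsilon)^2$, so a natural approach is to double-count the pairs $(i,s)$ with $s\in S_i$ and $e_{G,w}(s)$ near-maximal, combined with a Chernoff tail bound showing each such $s$ lies in $\Theta(r)$ sets with high probability. The obstacle is that the set $V^\star := \{v : e_{G,w}(v) \ge D_{G,w}/(1+\varepsilon)^2\}$ is graph-dependent and can be large, so this route does not obviously yield a uniform $O(r)$ bound; I therefore expect the ``$\Theta$'' here to really be used only as the $\Omega(r)$ lower bound needed for the optimization lemma, and would confirm by rereading whether the upper half of $\Theta(r)$ enters the downstream round-complexity analysis at all.
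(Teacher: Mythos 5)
Your proof matches the paper's: the ``Moreover'' part follows from Good-Approximation exactly as you chain it, and the counting part comes from the $\beta=\Theta(r)$ sets $S_{i_1},\dots,S_{i_\beta}$ containing $v^\star$ guaranteed by Good-Scale, each giving $f(i_j)\ge e_{G,w}(v^\star)=D_{G,w}$. Your suspicion about the $O(r)$ half of the $\Theta(r)$ claim is also well founded --- the paper's own proof establishes only the $\Omega(r)$ direction, and only that direction (as the lower bound on $\rho$ when invoking Lemma~\ref{lem:optimization}) enters the round-complexity analysis.
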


\begin{proof}
$$
\begin{aligned}
& f(i_j)=\max_{s\in S_{i_j}}\widetilde e_{G,w,i_j}(s)\ge\max_{s\in S_{i_j}}e_{G,w}(s)\ge e_{G,w}(v^\star)=D_{G,w}, & \forall j\in[1,\beta]; \\
& f(i)=\max_{s\in S_i}\widetilde e_{G,w,i}(s)\le\max_{s\in S_i}(1+\varepsilon)^2e_{G,w}(s)\le(1+\varepsilon)^2D_{G,w}, & \forall i\in[1,n].
\end{aligned}
$$
\end{proof}

\begin{lemma}
Given $i\in[1,n]$, there exists a quantum procedure performing the transformation
$$\bigotimes_{v\in V}\ket i_v\ket0_{{\rm leader}}\mapsto\bigotimes_{v\in V}\ket i_v\ket{f(i)}_{{\rm leader}}$$
in the quantum CONGEST model, and taking $\widetilde O\left(D_G+\frac n{\varepsilon\cdot r}+rk+\sqrt r\left(\frac r{\varepsilon\cdot k}\cdot D_G+r\right)\right)$ rounds, with probability at least $1-1/{\rm poly}(n)$.
\label{lem:evaluation}
\end{lemma}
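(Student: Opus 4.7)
My plan is to apply Lemma~\ref{lem:optimization} with ground set $X=S_i$, uniform amplitudes $\alpha_s=1/\sqrt{|S_i|}$, and target $f_i(s)=\widetilde e_{G,w,i}(s)$, so that the leader recovers $s^\star=\arg\max_{s\in S_i}f_i(s)$ and hence $f(i)=f_i(s^\star)$. Because $|S_i|=\Theta(r)$ and at least one element attains the maximum, the threshold $M=f(i)$ gives $\rho\ge1/|S_i|=\Omega(1/r)$, so the amplification factor is $\widetilde O(\sqrt r)$; a logarithmic binary search over $M\in\{0,1,\ldots,(1+\varepsilon)^2D_{G,w}\}$ (or equivalently Durr--Hoyer max-finding) converts a threshold witness into the actual value $f(i)$ at only a $\widetilde O(1)$ overhead.

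The substantive work is Initialization, which is classical, deterministic once $i$ is known, and independent of $s$. I run it in three stages. First, broadcast $i$ along a BFS tree rooted at the leader and make $S_i$ globally known in $\widetilde O(D_G)$ rounds. Second, for each of the $O(\varepsilon^{-1}\log n)$ scaled weight functions $w_j$ in Lemma~\ref{lem:bounded_hop_distance}, invoke the multi-source pipelined Bellman--Ford subroutine of~\cite{Nanongkai14STOC} with $|S_i|=\Theta(r)$ sources and hop budget $(1+2/\varepsilon)\ell$; each call costs $\widetilde O(\ell/\varepsilon+r)=\widetilde O(n/(\varepsilon r)+r)$ rounds and jointly yields $\widetilde d^\ell_{G,w}(u,v)$ for every $u\in S_i$, $v\in V$. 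Third, the leader collects $(G'_{S_i},w'_{S_i})$ via convergecast and broadcasts its $k$-shortcut graph $(G''_i,w''_i)$, which is produced by running $k$-bounded SSSP from every node of $S_i$ on $G'_{S_i}$; by Theorem~3.10 of~\cite{Nanongkai14STOC} this fits into $\widetilde O(rk)$ real rounds. At the end every node holds $S_i$, the weight table of $G''_i$, and the row $\widetilde d^\ell_{G,w}(\cdot,v)$ it needs locally.

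Setup is then free: since $S_i$ is globally known, the leader locally prepares $\sum_{s\in S_i}\ket s_I/\sqrt{|S_i|}$ with zero communication. Given $\ket s_I$ at the leader, Evaluation must produce
$$\widetilde e_{G,w,i}(s)=\max_{v\in V}\min_{u\in S_i}\left\{\widetilde d^{4|S_i|/k}_{G''_i,w''_i}(s,u)+\widetilde d^\ell_{G,w}(u,v)\right\}.$$
I pipe $\ket s$ down the BFS tree in $O(D_G)$ rounds so every node holds a copy coherently entangled with the leader, then run bounded-hop SSSP on the virtual graph $G''_i$ from the superposed source $s$ with hop bound $4|S_i|/k$. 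Since $G''_i$ is known everywhere, each of the $\widetilde O(r/(\varepsilon k))$ relaxation rounds dictated by Lemma~\ref{lem:bounded_hop_distance} is simulated by one broadcast across the real network at cost $O(D_G)$, giving $\widetilde O((r/(\varepsilon k))\cdot D_G)$ rounds in total. A final $\widetilde O(r)$-round pipelined broadcast of the $|S_i|$ values $\widetilde d^{4|S_i|/k}_{G''_i}(s,u)$ lets each $v$ compute its inner min locally against the precomputed table, and a convergecast aggregates the outer max into the leader's register. Every classical subroutine is made reversible by Bennett's trick, so the inverse Evaluation runs within the same bound and wipes the scratch space.

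Adding $T_0=\widetilde O(D_G+n/(\varepsilon r)+rk)$ and $\widetilde O(\sqrt r)\cdot T$ with $T=\widetilde O((r/(\varepsilon k))D_G+r)$ reproduces the claimed round complexity; union-bounding over the $\widetilde O(\sqrt r)$ Grover iterations and the $O(\log n)$ binary-search steps keeps the failure probability at $1/\text{poly}(n)$. The main obstacle I expect is the coherent implementation of Evaluation: Nanongkai's bounded-hop SSSP must be executed from a superposed source $s$ without collapsing it, which forces a careful reversible simulation of every Bellman--Ford relaxation on $G''_i$ (embedded with dilation $D_G$ into the real network) together with full uncomputation of per-node distances, predecessors, and intermediate scratch, so that distinct branches of the $\ket s$ superposition leave no distinguishing garbage that would destroy the $\sqrt r$-fold amplification.
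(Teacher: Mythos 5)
Your proposal follows essentially the same route as the paper's proof: it instantiates the Le Gall--Magniez optimization framework (Lemma~\ref{lem:optimization}) over $X=S_i$ with uniform amplitudes giving $\rho=\Omega(1/r)$, charges Initialization with the multi-source bounded-hop distances and the embedding of $(G''_i,w''_i)$ for $T_0=\widetilde O(D_G+n/(\varepsilon r)+rk)$, and charges the superposed-source SSSP on the overlay plus the local min / convergecast max with $T=\widetilde O((r/(\varepsilon k))D_G+r)$, exactly matching the paper's accounting (the paper merely places the overlay SSSP in Setup rather than Evaluation, which changes nothing since both are multiplied by $\sqrt r$). The extra binary search over the threshold $M$ is unnecessary but harmless, and your closing remarks on reversible simulation correspond to the paper's appeal to the standard Bennett quantization.
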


\begin{proof}
We give the quantum procedure maximizing $f_i$ (thus evaluating $f(i)$) by following the framework of distributed quantum optimization:
\begin{itemize}
\item {\bf $\text{Initialization}_i$:} Perform the transformation
$$
\bigotimes_{v\in V}\ket i_v\mapsto\bigotimes_{v\in V}\ket i_v\ket0_I\ket{\text{init}_i},$$
where
$$\ket{\text{init}_i}=\bigotimes_{v\in V,u\in S_i}\left|\widetilde d^\ell_{G,w}(u,v)\right\rangle_v\ket{G''_i,w''_i},$$
and $d^\ell_{G,w}(u,v)$ is given in Lemma~\ref{lem:bounded_hop_distance}, $G''_i$ and $w''_i$ are given in lemma~\ref{lem:approx_distance}.

\item {\bf $\text{Setup}_i$:} Perform the transformation
$$\bigotimes_{v\in V}\ket i_v\ket0_I\ket{\text{init}_i}\mapsto\bigotimes_{v\in V}\ket i_v\left(\sum_{s\in S_i}\frac1{|S_i|}\ket s_I\ket{\text{data}_i(s)}\right)\ket{\text{init}_i},$$
where
$\ket{\text{data}_i(s)}=\bigotimes_{v\in V}\ket s_v\bigotimes_{v\in V,u\in S_i}\left|\widetilde d_{G''_i,w''_i}(s,u)\right\rangle_v$.

\item {\bf $\text{Evaluation}_i$:} Perform the transformation
$$\bigotimes_{v\in V}\ket i_v\left(\ket{s,0}_I\ket{\text{data}_i(s)}\right)\ket{\text{init}_i}\mapsto\bigotimes_{v\in V}\ket i_v\left(\ket{s,f_i(s)}_I\ket{\text{data}_i(s)}\right)\ket{\text{init}_i}.$$
\end{itemize}

We now analyze the round complexity:
\begin{itemize}
\item In $\widetilde O\left(D_G+\frac n{\varepsilon\cdot r}+r\right)$ rounds, each $v\in V$ can know $\widetilde d^\ell_{G,w}(u,v)$ for each $u\in S_i$, with high probability, due to Lemma~\ref{lem:bounded_hop_mssp}.
After that, the overlay network $\left(G''_i,w''_i\right)$ can be embedded in $O(D_G+rk)$ rounds due to Lemma~\ref{lem:embedding} (we say that the network $G=(V,E)$ embeds an overlay network $G'=(V',E')$ with a weight function $w':E'\to\mathbb N^+$ if $V'\subseteq V$ and for each $v\in V'$, it stores each $e\in E'$ incident to $v$ along with $w'(e)$ in the local memory).
Therefore, the procedure $\text{Initialization}_i$ can be implemented in $T_0=\widetilde O\left(D_G+\frac n{\varepsilon\cdot r}+rk\right)$ rounds.
\item The node leader can collect $S_i$ in $O(D_G+r)$ rounds.
It then prepares the quantum state $\sum_{s\in S_i}\frac1{|S_i|}\ket s_I$ and broadcasts to all nodes using CNOT copies, in $O(D_G)$ rounds.
Thus, the transformation
$$\bigotimes_{v\in V}\ket i_v\ket0_I\ket{\text{init}_i}\mapsto\bigotimes_{v\in V}\ket i_v\left(\sum_{s\in S_i}\frac1{|S_i|}\ket s_I\bigotimes_{v\in V}\ket s_v\right)\ket{\text{init}_i}$$
can be implemented in $O(D_G+r)$ rounds.
Besides, the transformation
$$\bigotimes_{v\in V}\ket i_v\bigotimes_{v\in V}\ket s_v\ket{\text{init}_i}\mapsto\bigotimes_{v\in V}\ket i_v\bigotimes_{v\in V}\ket s_v\bigotimes_{v\in V,u\in S_i}\left|\widetilde d^{4|S_i|/k}_{G''_i,w''_i}(s,u)\right\rangle_v\ket{\text{init}_i}$$
can be implemented in $T_1=\widetilde O\left(\frac r{\varepsilon\cdot k}\cdot D_G+r\right)$ rounds since Lemma~\ref{lem:sssp_on_overlay} implies that, after the overlay network $\left(G''_i,w''_i\right)$ is embedded, each $v\in V$ can know $\widetilde d^{4|S_i|/k}_{G''_i,w''_i}(s,u)$ for each $u\in S_i$ within $T_1$ rounds.
Therefore, the procedure $\text{Setup}_i$ can be implemented in $T_1=\widetilde O\left(\frac r{\varepsilon\cdot k}\cdot D_G+r\right)$ rounds.
\item For the procedure $\text{Evaluation}_i$, recall that $f_i(s)=\max_{v\in V}\widetilde d_{G,w,i}(s,v)$ where
$$\widetilde d_{G,w,i}(s,v)=\min_{u\in S_i}\left\{\widetilde d^{4|S_i|/k}_{G''_i,w''_i}(s,u)+\widetilde d^\ell_{G,w}(u,v)\right\}.$$
By definition, for any $v\in V$ and $u\in S_i$, $\widetilde d^{4|S_i|/k}_{G''_i,w''_i}(s,u)$ and $\widetilde d^\ell_{G,w}(u,v)$ have been stored in the local memory of $v$, i.e., $\ket\cdot_v$.
Thus, each $v\in V$ can locally compute $\widetilde d_{G,w,S_i}(s,v)$, and the node leader can compute the maximum by converge-casting in $O(D_G)$ rounds.
So the procedure $\text{Evaluation}_i$ can be implemented in $T_2=O(D_G)$ rounds.
\end{itemize}
By Lemma~\ref{lem:optimization}, there exists a quantum procedure maximizing $f_i$ in $\widetilde O(T_0+\sqrt r(T_1+T_2))$ rounds with high probability.
\end{proof}

\begin{proof}[Proof of Theorem~\ref{thm:upper_bound}]
We give a quantum procedure maximizing $f$ also by following the framework of distributed quantum optimization:
\begin{itemize}
\item \textbf{Initialization} is a classical procedure which samples vertex sets $S_1,\cdots,S_n$, and $\ket{\text{init}}$ represents the corresponding classical information.
\item \textbf{Setup:} Perform the transformation
$$\ket0_I\ket{\text{init}}\mapsto\sum_{i=1}^n\frac1n\ket i_I\ket{\text{data}(i)}\ket{\text{init}},$$
where $\ket{\text{data}(i)}=\bigotimes_{v\in V}\ket i_v$.
\item \textbf{Evaluation:} Perform the transformation
$$\ket{i,0}_I\ket{\text{data}(i)}\ket{\text{init}}\mapsto\ket{i,f(i)}_I\ket{\text{data}(i)}\ket{\text{init}}.$$
\end{itemize}

We now analyze the round complexity:
\begin{itemize}
\item $S_1,\cdots,S_n$ are sampled locally in parallel, and the procedure Initialization is free, i.e., $T_0=0$.
\item The node leader prepares the quantum state $\sum_{i=1}^n\frac1n\ket i_I$ and broadcast using CNOT copies to all nodes.
Therefore, the procedure Setup can be implemented in $T_1=O(D_G)$ rounds.
\item The procedure Evaluation can be of $T_2=\widetilde O\left(D_G+\frac n{\varepsilon\cdot r}+rk+\sqrt r\left(\frac r{\varepsilon\cdot k}\cdot D_G+r\right)\right)$ rounds by Lemma~\ref{lem:evaluation}.
\end{itemize}

By Lemma~\ref{lem:optimization} and Lemma~\ref{lem:approximation}, there exists a quantum procedure that find, with high probability, some $i\in[1,n]$ such that $D_{G,w}\le f(i)\le(1+\varepsilon)^2D_{G,w}$, in
$$\widetilde O(T_0+\sqrt{n/r}(T_1+T_2))=\widetilde O\left(\sqrt{n/r}\left(D_G+\frac n{\varepsilon\cdot r}+rk+\sqrt r\left(\frac r{\varepsilon\cdot k}\cdot D_G+r\right)\right)\right)$$
rounds.
By the choice of the parameters in Eq.~\eqref{eqn:alg_parameters}, Theorem~\ref{thm:upper_bound} follows.
\end{proof}

\section{Lower Bound}

To prove the lower bound on the round complexity of approximating (weighted) diameter in the quantum CONGEST model, we combine the reduction in~\cite{Elkin06,SarmaHKKNPPW12,ElkinKNP14} and the graph gadget in~\cite{AbboudCK16}.

\subsection{Reduction from Server Model}
\label{sec:reduction}

We briefly outline the reduction introduced by Elkin et al.~\cite{Elkin06,SarmaHKKNPPW12,ElkinKNP14} from the Server model to prove the hardness of certain graph problems such as diameter and radius.
We will introduce a distributed network $G=(V,E)$ and embed a certain two-argument function $F:\{0,1\}^k\times\{0,1\}^k\to\{0,1\}$ into the network by showing that if the instance on the network $G$ has a low round-complexity protocol in the quantum CONGEST model, then there exists a low communication complexity protocol for $F$ in the quantum Server model.
Thus, the hardness of diameters and radius in the quantum CONGEST model is reduced to proving the lower bounds the communication complexity in the quantum Server model.

\begin{figure}[ht]
\centering
\includegraphics[width=0.5\linewidth]{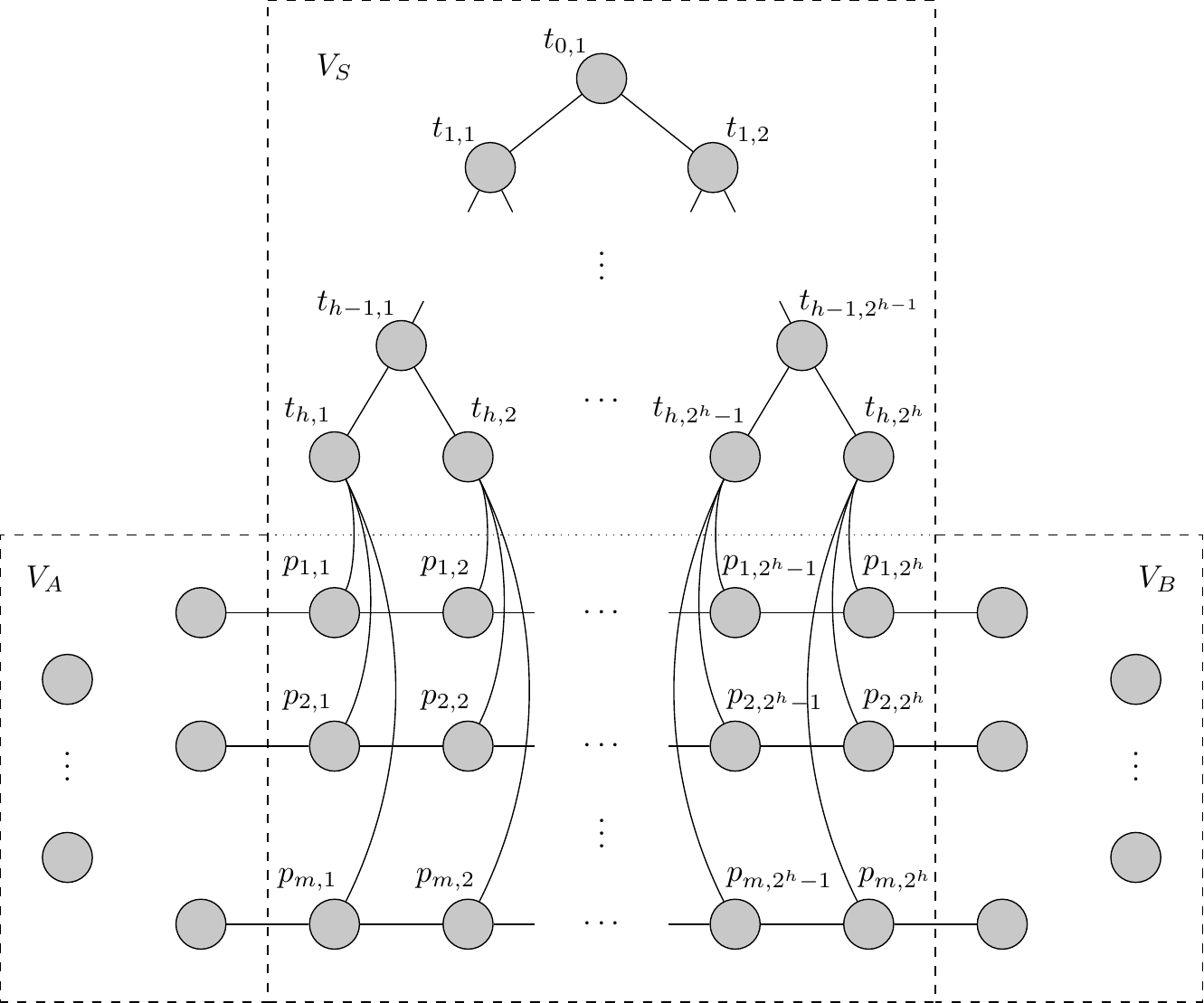}
\caption{An example of constructed graph $G$.}
\label{fig:example}
\end{figure}

The network $G=(V,E)$ is depicted by Figure~\ref{fig:example} where $V=V_S\uplus V_A\uplus V_B$ and $E=E_S\uplus E_A\uplus E_B\uplus E'$.
We use $G[U]$ to denote the subgraph induced by vertex set $U\subseteq V$, then $E_S,E_A,E_B$ are the edges in $G[V_S],G[V_A],G[V_B]$ respectively.
And $E'$ denotes the edges between $V_S$ and $V_A\uplus V_B$.

$G[V_S]$ includes a full binary tree of height $h$ and $m$ disjoint paths of length $2^h-1$.
Each of the $2^h$ leaves of the binary tree is connected to the nodes on the paths as depicted in Figure~\ref{fig:example}.
Suppose nodes of depth $i$ on the tree are $t_{i,1},\cdots,t_{i,2^i}$ and nodes on the $i$-th path are $p_{i,1},\ldots,p_{i,2^h}$ from left to right.
Then $t_{h,1},\ldots, t_{h,2^h}$ are the leaves of the binary tree in $G[V_S]$.
For each $i\in[1,m]$ and $j\in[1,2^h]$, there is an edge between $t_{h,j}$ and $p_{i,j}$.
Thus,
$$
\begin{aligned}
V_S & =\left\{t_{i,j}:i\in[0,h],j\in[1,2^i]\right\} \\
& \uplus\left\{p_{i,j}:i\in[1,m],j\in[1,2^i]\right\}, \\
E_S & =\left\{\{t_{i,j},t_{i-1,\lceil j/2\rceil}\}:i\in[1,h],j\in[1,2^i]\right\} \\
& \uplus\left\{\{p_{i,j},p_{i,j-1}\}:i\in[1,m],j\in[2,2^m]\right\} \\
& \uplus\left\{\{t_{h,j},p_{i,j}\}:i\in[1,m],j\in[1,2^m]\right\}.
\end{aligned}
$$
$V_A$ contains at least $m$ nodes, each of which is connected to $p_{i,1}$ for $1\leq i\leq m$.
$V_B$ contains at least $m$ nodes, each of which is connected to $p_{i,2^h}$ for $1\leq i\leq m$.
Those $2m$ edges are contained in $E'$.
The subgraphs $G[V_A]$ and $G[V_B]$ are decided by Alice's input and Bob's input, respectively.

\bigskip

The following lemma gives an efficient simulation of algorithms on network $G$ by the protocols in the quantum Server model.

\begin{lemma}[Quantum Simulation Lemma]
Suppose Alice and Bob are given $(V_A,E_A)$ and $(V_B,E_B)$, respectively.
For any $T$-round ($T<2^h/2$) distributed algorithm on network $G$ described above, there exists a communication protocol for Alice and Bob in the quantum Server model to simulate the algorithm with communication complexity $O(T\cdot h\cdot B)$,  where $B$ denotes the bandwidth in the CONGEST model.
\label{lem:simulation}
\end{lemma}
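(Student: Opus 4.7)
The plan is a round-by-round simulation in which the server emulates the public subgraph $G[V_S]$, Alice emulates $G[V_A]$, and Bob emulates $G[V_B]$. The invariant to maintain is that after the $t$-th simulated round the three parties jointly hold the true round-$t$ quantum state of every node of $G$. Since server-sent messages are free, the protocol will only be charged for Alice-to-server and Bob-to-server traffic.

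To advance from round $t$ to round $t+1$, each party applies the one-round local unitary for the nodes it owns. The only edges whose endpoints are split across different parties lie in $E'$, so in each round Alice and Bob must relay to the server the messages crossing $E'$ on their sides, and the server replies (for free) with the messages going the other way. A naive count charges each of Alice and Bob $\Theta(m B)$ qubits per round, one per $E'$-edge, yielding $O(T m B)$ in total; the compression to $O(T h B)$ comes from the hypothesis $T<2^h/2$.

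The key observation is that each path $p_{i,1},\dots,p_{i,2^h}$ has length $2^h-1>2T$, so within $T$ rounds no message can traverse a full path from one end to the other along the path itself. Hence the only route by which Alice's input can influence Bob's side of $V_S$ within the simulated time horizon, and vice versa, is the binary subtree of depth $h$. Accordingly, I will extend Alice's and Bob's simulations to also cover the path-prefixes and the adjacent tree leaves on their respective sides, so that the only genuine cross-party dependence each round flows through the internal binary tree. Because this tree has depth $h$, the interface between each party and the server can be encoded in $O(h)$ messages of width $B$ per round, summing to $O(T h B)$ over $T$ rounds. Quantum correctness is preserved by sending interface messages as coherent qubits with the server acting as a free quantum relay, avoiding any need to clone state.

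The main obstacle is rigorously justifying this extended simulation. One must characterize, inductively on $t\le T$, the exact set of $V_S$-nodes whose round-$t$ state each party can deduce from its local input plus the previously received tree-interface messages, show that this set is large enough to generate all the round-$(t+1)$ outgoing messages on the party's $E'$ boundary, and verify that the $O(h)$-wide tree interface is indeed sufficient for the server to keep its own view of the tree consistent. Reconciling this combinatorial analysis with the quantum, no-cloning setting, most cleanly by pre-distributing EPR pairs between the server and each party so that interface qubits can be shuttled by teleportation or direct relay without duplication, is where most of the work will go.
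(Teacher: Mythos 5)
Your proposal follows essentially the same route as the paper's proof: a round-by-round simulation in which Alice and Bob progressively absorb the path-prefixes/suffixes and the adjacent tree nodes on their respective sides, so that only the $O(h)$ tree-interface messages per round are charged to Alice and Bob, yielding $O(T\cdot h\cdot B)$. The inductive bookkeeping you defer is precisely what the paper supplies --- explicit ownership sets that advance by one path-position per round, with the case analysis showing path nodes never generate charged messages and at most $2h$ charged Alice/Bob-to-server messages arise per round at the tree boundary --- and your concern about no-cloning is moot, since ownership of each node is handed off rather than duplicated and the server's (quantum) messages are free.
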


\begin{proof}
The proof of Lemma~\ref{lem:simulation} follows closely with the proof in \cite[Proof of Theorem 3.5]{ElkinKNP14}.
The protocol we will construct simulates the distributed algorithm round by round.
Thus, it also has $T<2^h/2$ rounds of communication.
In the beginning, the server simulates all the nodes in $V_S$ which are independent of Alice and Bob's inputs.
And in the end of the $r$-th round, the server simulates $p_{i,1+r},\cdots,p_{i,2^h-r}$ on the $i$-th path and nodes $t_{h,1+r},\cdots,t_{h,2^h-r}$ along with their ancestors on the binary tree, while Alice simulates the nodes on the left side and Bob simulates on the right side.
More formally, in the end of the $r$-th round, the server simulates
$$\left\{p_{i,j}:i\in[1,m],j\in[1+r,2^h-r]\right\}\cup\left\{t_{i,j}:i\in[0,h],j\in\left[\left\lceil(1+r)/2^{h-i}\right\rceil,\left\lceil(2^h-r)/2^{h-i}\right\rceil\right]\right\};$$
Alice simulates
$$V_A\cup\left\{p_{i,j}:i\in[1,m],j\in[1,1+r)\right\}\cup\left\{t_{i,j}:i\in[0,h],j\in\left[1,\left\lceil(1+r)/2^{h-i}\right\rceil\right)\right\};$$
Bob simulates
$$V_B\cup\left\{p_{i,j}:i\in[1,m],j\in(2^h-r,2^h]\right\}\cup\left\{t_{i,j}:i\in[0,h],j\in\left(\left\lceil(2^h-r)/2^{h-i}\right\rceil,2^i\right]\right\}.$$

We describe the simulation of the computation and communication of a processor $v$ in the $r$-th round, and count the total communication complexity.
\begin{itemize}
\item If $v$ is owned by Alice or the server in the $(r-1)$-th round and will be owned by Alice in the $r$-th round, Alice needs the local information of $v$ in the $(r-1)$-th round and messages from $\Gamma(v)$ (neighbours of $v$) to $v$ in the $r$-th round, which can be obtained by local computation and communication from the server to Alice since $T<2^h/2$, which implies that each of $v$ and nodes in $\Gamma(v)$ is owned by either Alice or the server in the $(r-1)$-th round for $r\leq T$.
So in this case, we only need communication from the server to Alice in the Server model.
This part will not be counted to complexity by definition.
\item If $v$ is owned by Bob or the server in the $(r-1)$-th round and will be owned by Bob in the $r$-th round, no communication will be counted to complexity by the same argument as mentioned above.
\item If $v$ is owned by the server in both the $(r-1)$-th round and the $r$-th round,
the server needs the messages from $\Gamma(v)$ to $v$.
For each node $u\in\Gamma(v)$ owned by Alice or Bob in the $(r-1)$-th round, Alice or Bob will simulates the local computation of $u$ in the $r$-th round, and send the message to the server.
\begin{itemize}
\item If $v$ is on the paths on $V_S$, none of $\Gamma(v)$ is owned by Alice and Bob in the $(r-1)$-th round.
\item If $v$ is on the binary tree, node $u\in\Gamma(v)$ is owned by Alice in the $(r-1)$-th round only if all nodes of the same depth with $v$, meanwhile on the left side of $v$, are not owned by the server in the $r$-th round, and $u$ is the left-child of $v$.
Similarly, node $u\in\Gamma(v)$ is owned by Bob in the $(r-1)$-th round only if all nodes of the same depth with $v$, meanwhile on the right side of $v$, are not owned by the server in the $r$-th round, and $u$ is the right-child of $v$.
In the $r$-th round, there are at most $2h$ such $(u,v)$ in total.
\end{itemize}
\end{itemize}
Hence, a total of $O(T\cdot h)$ messages, each of size $O(B)$, are sent from Alice or Bob to the server.
\end{proof}

\subsection{Hardness of Approximating Diameter}

we will use $G$ constructed above as a gadget to prove a lower bound on round complexity of approximating weighted diameter in the quantum CONGEST model.
The specific graph depicted in Figure~\ref{fig:diameter} will contain $n=\left(2^{h+1}-1\right)+\left(2s+\ell\right)\left(2^h+2\right)+2\cdot2^s$ nodes, where parameters $h,s,\ell$ are chosen as follows throughout this section.
\begin{equation}
\label{eqn:paramters}
h\text{ is some even number},s=3h/2,\ell=2^{s-h}.
\end{equation}
This choice makes $2^h=\widetilde\Theta(n^{2/3}),2^s=\widetilde\Theta(n)$ and $\ell=\widetilde\Theta(n^{1/3})$.

\begin{figure*}[ht]
\centering
\includegraphics[width=\linewidth]{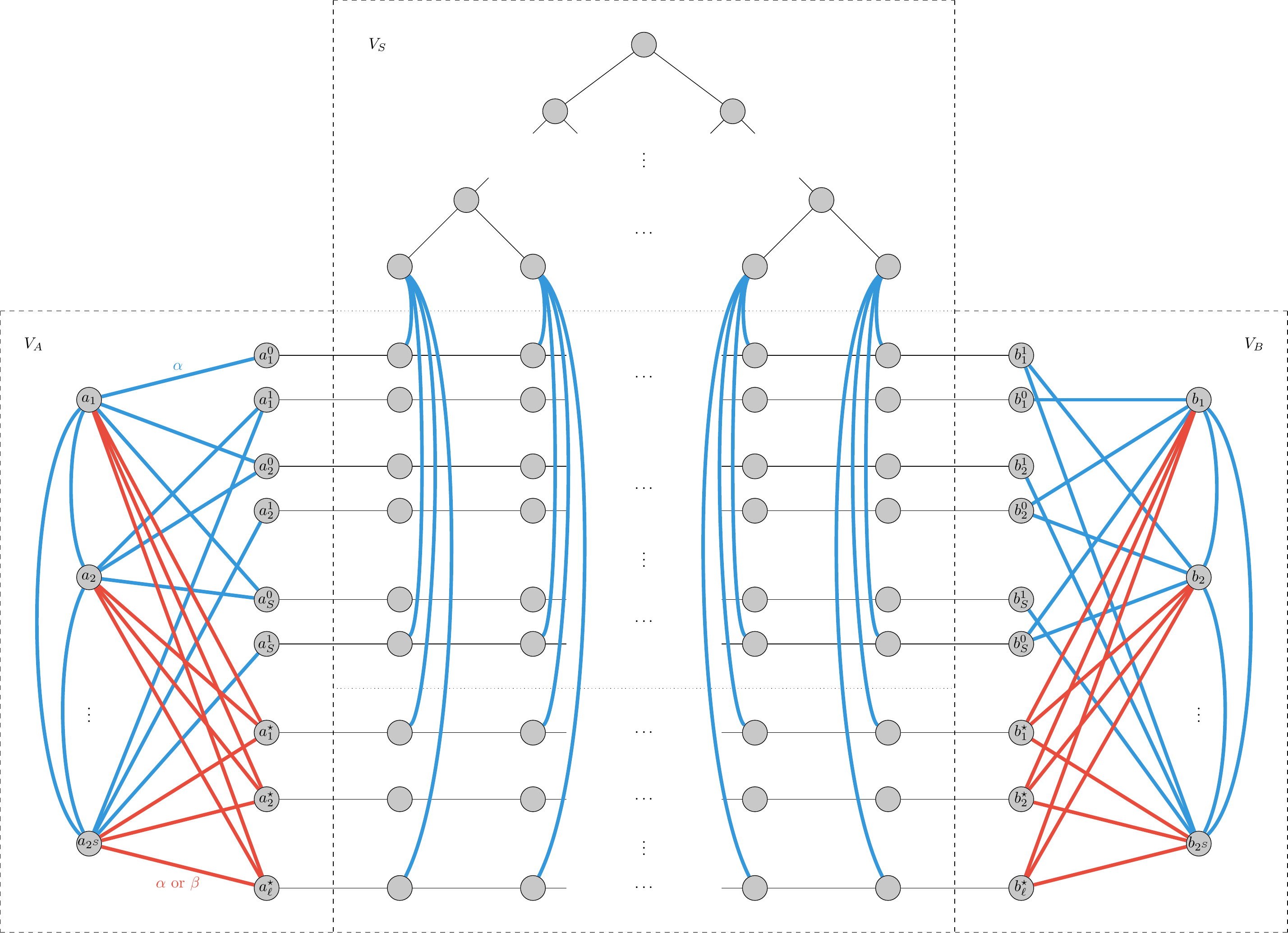}
\caption{
Graph $G$ for approximating diameter.
The black edges are of weight $1$; the blue edges are of weight $\alpha$; and weights of red edges are determined by inputs $x,y$, i.e, $w(\{a_i,a^\star_j\})=\alpha$ if $x_{i,j}=1$ and $w(\{a_i,a^\star_j\})=\beta$ if $x_{i,j}=0$, and $w(\{b_i,b^\star_j\})=\alpha$ if $y_{i,j}=1$ and $w(\{b_i,b^\star_j\})=\beta$ if $y_{i,j}=0$, for $i\in[1,2^s]$ and $j\in[1,\ell]$.
}
\label{fig:diameter}
\end{figure*}

\begin{theorem}[Restated]
For any constant $\varepsilon\in(0,\frac12]$, any algorithm, with probability at least $\frac{11}{12}$, computing a $(\frac32-\varepsilon)$-approximation of the weighted diameter in the quantum CONGEST model requires $\Omega\left(\frac{n^{2/3}}{\log^2n}\right)$ rounds, even when the unweighted diameter is $\Theta(\log n)$, where $n$ denotes the number of nodes.
\label{thm:diameter}
\end{theorem}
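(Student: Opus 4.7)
The plan is to reduce a two-party Set-Disjointness-type problem to $(3/2-\varepsilon)$-approximation of the weighted diameter, and then combine the Quantum Simulation Lemma (Lemma~\ref{lem:simulation}) with the approximate-degree lifting theorem of~\cite{ElkinKNP14}. Alice's and Bob's inputs $x,y \in \{0,1\}^{2^s \times \ell}$ are encoded into the red edges of Figure~\ref{fig:diameter}: $w(\{a_i,a^\star_j\})$ is $\alpha$ or $\beta$ according to $x_{i,j}$, and symmetrically $w(\{b_i,b^\star_j\})$ is determined by $y_{i,j}$. The target Boolean function is
$$F(x,y) \;=\; \bigvee_{i=1}^{2^s} \bigvee_{j=1}^{\ell} (x_{i,j} \wedge y_{i,j}),$$
i.e., the negation of Set-Disjointness on $N = 2^s \cdot \ell = \widetilde\Theta(n^{4/3})$ bits.

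First, I would pick concrete constants $\alpha < \beta$ (following~\cite{AbboudCK16}, e.g.\ $\beta = 2\alpha$) so that $D_{G,w}$ exhibits a multiplicative gap of $3/2-\varepsilon$ between the two cases. When $F(x,y)=1$, some coordinate $(i,j)$ satisfies $x_{i,j}=y_{i,j}=1$, yielding a short route $a_i \to a^\star_j \to b^\star_j \to b_i$ of weight-$\alpha$ edges only, giving $D_{G,w} \le 3\alpha + O(1)$. When $F(x,y)=0$, every such route must traverse at least one $\beta$-edge on either the Alice or the Bob side, pushing $D_{G,w}$ above $(3/2)(3\alpha)$. The shortcuts through the height-$h$ binary tree and the long paths of $V_S$ (black weight-$1$ and blue weight-$\alpha$ edges) must not close this gap; this is precisely the property for which the gadget of~\cite{AbboudCK16} is specifically engineered. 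I would also verify that $n = \widetilde\Theta(2^s)$ and that the unweighted diameter of $G$ is $\Theta(h) = \Theta(\log n)$, thanks to the binary tree inside $V_S$.

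Next, Lemma~\ref{lem:simulation} turns any $T$-round quantum CONGEST algorithm that $(3/2-\varepsilon)$-approximates $D_{G,w}$ on this graph into a Server-model protocol computing $F(x,y)$ with quantum communication $O(T \cdot h \cdot B) = O(T \log^2 n)$; hence $T = \Omega\!\left(Q^{sv}_{1/12}(F) / \log^2 n\right)$. Finally, I invoke the lifting theorem of~\cite{ElkinKNP14}, which lower-bounds $Q^{sv}_\varepsilon$ of an $\mathrm{OR}$-composed function by its $\varepsilon$-approximate polynomial degree; combined with the $\widetilde\Omega(\sqrt N)$ approximate-degree lower bound for $\mathrm{OR}_N$, this gives $Q^{sv}_{1/12}(F) = \widetilde\Omega(\sqrt N) = \widetilde\Omega(n^{2/3})$. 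Substituting back yields $T = \Omega(n^{2/3}/\log^2 n)$, as claimed.

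The main obstacle is the diameter-gap verification: since the binary tree and the $m$ long paths of $V_S$ provide many alternative routes between $a_i$ and $b_i$, one must enumerate all shortest-path candidates (through the tree, through nested segments of a path, or through the sides $V_A, V_B$) and show that none shortcuts the forced $\beta$-edge when $F(x,y)=0$. This combinatorial verification is what dictates both the choice of $\alpha,\beta$ and the parameter regime $s=3h/2$, $\ell = 2^{s-h}$, so that the $\widetilde\Omega(\sqrt N) = \widetilde\Omega(n^{2/3})$ approximate-degree bound matches the simulation overhead.
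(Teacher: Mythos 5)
Your overall architecture (gadget graph, Quantum Simulation Lemma, lifting to approximate degree) matches the paper's, and your parameter accounting ($2^s=\widetilde\Theta(n)$, $\ell=\widetilde\Theta(n^{1/3})$, $\sqrt{2^s\cdot\ell}=\widetilde\Theta(n^{2/3})$, simulation overhead $O(T\cdot h\cdot B)=O(T\log^2 n)$) is consistent with it. However, there is a genuine gap: the target function you reduce from is wrong for the \emph{diameter}. You take $F(x,y)=\bigvee_{i,j}(x_{i,j}\wedge y_{i,j})$, i.e., non-disjointness. In the gadget, the coordinates $(i,1),\dots,(i,\ell)$ control only the single distance $d(a_i,b_i)$: it is small exactly when some $j$ has $x_{i,j}=y_{i,j}=1$. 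Since the diameter is the \emph{maximum} distance, it is small if and only if \emph{every} pair $(a_i,b_i)$ is close, i.e., for \emph{all} $i$ there \emph{exists} $j$ with $x_{i,j}=y_{i,j}=1$. Under your OR-of-everything function, a YES instance can have a single intersecting coordinate $(i,j)$ while some other index $i'$ has no intersecting $j$; then $d(a_{i'},b_{i'})\ge\min\{\alpha+\beta,3\alpha\}$ and the diameter is still large, so YES and NO instances are not separated and the reduction fails. The function that actually reduces to the diameter is $F=\mathrm{AND}_{2^s}\circ(\mathrm{OR}_\ell\circ\mathrm{AND}_2^\ell)^{2^s}$ (Lemma~\ref{lem:reduction}); the OR-composed function you chose is the one the paper uses for the \emph{radius}, and even there only after adding the auxiliary node $a_0$. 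Correspondingly, the degree bound needed is not $\deg_{1/3}(\mathrm{OR}_N)=\Theta(\sqrt N)$ but the approximate degree of the read-once formula $\mathrm{AND}_{2^s}\circ\mathrm{OR}_{\ell/4}$ (Lemma~\ref{lem:read_once}) composed with the $\mathrm{VER}$ gadget of the lifting theorem; numerically this still yields $\Omega\left(\sqrt{2^s\cdot\ell}\right)=\Omega(2^h)$, so the final bound survives once the function is corrected.

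A secondary inaccuracy: after contracting the weight-$1$ edges, $a^\star_j$, the $j$-th long path, and $b^\star_j$ merge into one node, so the short route in the YES case is $a_i\to a^\star_j\to b_i$ of length $2\alpha$ (plus an additive $n$ from the contracted edges), not $3\alpha+O(1)$. The paper's actual gap is $\max\{2\alpha,\beta\}+n$ versus $\min\{\alpha+\beta,3\alpha\}$ with $\alpha=n^2$ and $\beta=2n^2$, which gives ratio $3/2-o(1)$; your stated thresholds ($3\alpha$ versus $\tfrac32\cdot3\alpha$) do not hold in this gadget, though this part is fixable by redoing the case analysis of Table~\ref{tab:distance}.
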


On network $G=(V,E)$ described in Section~\ref{sec:reduction}, we specify $G[V_A]$ and $G[V_B]$.
Let
\begin{equation*}
\begin{aligned}
V_A & =\left\{a_1,\cdots,a_{2^s}\right\}\uplus\left\{a^0_1,a^1_1,\cdots,a^0_s,a^1_s\right\}\uplus\left\{a^\star_1,\cdots,a^\star_\ell\right\}, \\
V_B & =\left\{b_1,\cdots,b_{2^s}\right\}\uplus\left\{b^0_1,b^1_1,\cdots,b^0_s,b^1_s\right\}\uplus\left\{b^\star_1,\cdots,b^\star_\ell\right\}.
\end{aligned}
\end{equation*}
The edges $E_A, E_B$ and $E'$ are specified as follows.
$$
\begin{aligned}
E_A & =\left\{\{a_i,a^{\text{bin}(i,j)}_j\}:i\in[1,2^s],j\in[1,s]\right\} \\
& \uplus\left\{\{a_i,a^\star_j\}:i\in[1,2^s],j\in[1,\ell]\right\} \\
& \uplus\left\{\{a_i,a_j\}:i,j\in[1,2^s],i\ne j\right\}, \\
E_B & =\left\{\{b_i,b^{\text{bin}(i,j)}_j\}:i\in[1,2^s],j\in[1,s]\right\} \\
& \uplus\left\{\{b_i,b^\star_j\}:i\in[1,2^s],j\in[1,\ell]\right\} \\
& \uplus\left\{\{b_i,b_j\}:i,j\in[1,2^s],i\ne j\right\}, \\
E' & =\left\{\{a^0_i,p_{2i-1,1}\},\{b^1_i,p_{2i-1,2^h}\}:i\in[1,s]\right\} \\
& \uplus\left\{\{a^1_i,p_{2i,1}\},\{b^0_i,p_{2i,2^h}\}:i\in[1,s]\right\} \\
& \uplus\left\{\{a^\star_i,p_{2s+i,1}\},\{b^\star_i,p_{2s+i,2^h}\}:i\in[1,\ell]\right\},
\end{aligned}
$$
where  $\text{bin}(i,j)$ denote the $j$-th bit in binary expression of integer $i-1$.

The node pairs $(a^0_i, p_{2i-1,1})$, $(a^1_i,p_{2i,1})$, $(b^0_i,p_{2i,2^h})$, $(b^1_i,p_{2i-1,2^h})$ for $1\le i\le s$, and $(a^\star_j,p_{2s+j,1})$, $(b^\star_j,p_{2s+j,2^h})$ for $1\le j\le\ell$ are connected.
For each $i\in[1,2^s]$, $a_i$ is connected to $a^{\text{bin}(i,j)}_j$ for each $j\in[1,s]$, and $a_i$ is connected to $a^\star_j$ for each $j\in[1,\ell]$.
Moreover, $G[\{a_1,\cdots,a_{2^s}\}]$ is a clique.
The edges in $G[V_B]$ are linked in the same way as the edges in $G[V_A]$.

The weights of the edges are specified as follows, which are also depicted in Figure~\ref{fig:diameter}.
\begin{itemize}
\item The edges on the binary tree and the edges on the $2s+\ell$ paths (including the endpoints in $V_A$ and $V_B$) are of weight $1$ (the black edges in Figure~\ref{fig:diameter}).
\item Recall that Alice and Bob receive inputs $x,y\in\{0,1\}^{2^s\cdot\ell}$ respectively.
$x$ and $y$ are indexed by $x_{i,j}$ and $y_{i,j}$ for $i\in[1,2^s],j\in[1,\ell]$ where $s$ and $\ell$ are given in Eq.~\eqref{eqn:paramters}.
For each $i\in[1,2^s],j\in[1,\ell]$, $w(\{a_i,a^\star_j\})=\alpha$ if $x_{i,j}=1$ and $w(\{a_i,a^\star_j\})=\beta$ if $x_{i,j}=0$ ($\alpha<\beta$); weights of edges between $\{b_1,\cdots,b_{2^s}\}$ and $\{b^\star_1,\cdots,b^\star_\ell\}$ are assigned according to $y$ in the same way (the red edges in Figure~\ref{fig:diameter}).
\item The edges between the binary tree and the $2s+\ell$ paths, those between $\{a_1,\cdots,a_{2^s}\}$ and $\{a^0_1,a^1_1,\cdots,a^0_s,a^1_s\}$, and those between $\{b_1,\cdots,b_{2^s}\}$ and $\{b^0_1,b^1_1,\cdots,b^0_s,b^1_s\}$ are of weight $\alpha$; weights of edges inside $G[\{a_1,\cdots,a_{2^s}\}]$ and $G[\{b_1,\cdots,b_{2^s}\}]$ are also $\alpha$ (the blue edges in Figure~\ref{fig:diameter}).
\end{itemize}

It is sufficient to analyze the diameter of graph after contracting all edges of weight $1$ due to the following lemma.
An edge is contracted if the two endpoints are merged to one node, and the adjacent edges of the two endpoints are incident to it.
If there are parallel edges after contraction,  we only keep the one with the lowest weight.

\begin{lemma}
Given a weighted graph $(G,w)$ where $G=(V,E)$ and $w:E\to\mathbb N^+$.
Let $G'$ be the graph after contracting all edges of weight $1$.
We have $D_{G',w}\le D_{G,w}\le D_{G',w}+n$ and $R_{G',w}\le R_{G,w}\le R_{G',w}+n$, where $n=|V|$.
\label{lem:contraction}
\end{lemma}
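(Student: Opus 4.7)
Let $\pi : V \to V(G')$ denote the projection sending each node of $G$ to the super-node of $G'$ it lies in (that is, to its connected component in the subgraph of weight-$1$ edges). The entire lemma reduces to the pointwise distance sandwich
\[
 d_{G',w}\bigl(\pi(u),\pi(v)\bigr) \;\le\; d_{G,w}(u,v) \;\le\; d_{G',w}\bigl(\pi(u),\pi(v)\bigr) + n \qquad \text{for all } u,v \in V.
\]
The diameter bounds follow immediately by taking $\max_{u,v}$. For the radius, I would first derive the pointwise eccentricity sandwich $e_{G'}(\pi(u)) \le e_G(u) \le e_{G'}(\pi(u)) + n$; then $R_{G',w} = \min_{C} e_{G'}(C) = \min_{u} e_{G'}(\pi(u)) \le \min_u e_G(u) = R_{G,w}$ by surjectivity of $\pi$, and conversely, picking a center $C^{\star}$ of $G'$ along with any $u^{\star} \in C^{\star}$, one gets $R_{G,w} \le e_G(u^{\star}) \le e_{G'}(C^{\star}) + n = R_{G',w} + n$.

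The left inequality is a straightforward projection: take a shortest $u$-$v$ path $P$ in $G$ and apply $\pi$ edge by edge. A weight-$1$ edge of $P$ collapses to a loop at a single super-node and is dropped; every other edge $e$ of $P$ yields an edge of $G'$ of weight at most $w(e)$ (the ``keep the lowest weight'' rule for parallel edges can only help). The resulting walk in $G'$ from $\pi(u)$ to $\pi(v)$ has length at most $w(P)=d_{G,w}(u,v)$.

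The right inequality is a lift: fix a shortest path $P' = C_0 C_1 \cdots C_k$ in $G'$ with $C_0 = \pi(u)$ and $C_k = \pi(v)$. Each edge $\{C_{i-1},C_i\}$ of $G'$ is witnessed by some edge $\{y_i, x_i\} \in E$ of the same weight with $y_i \in C_{i-1}$ and $x_i \in C_i$; set $y_0 = u$ and $x_{k+1} = v$. Inside each $C_i$, I would travel from $x_i$ to $y_{i+1}$ along a spanning tree of the weight-$1$ subgraph induced on $C_i$, which exists because $C_i$ is by construction a connected component of that subgraph; this costs at most $|C_i|-1$. Splicing the intra-class and inter-class legs produces a walk from $u$ to $v$ in $G$ of total weight at most $w(P') + \sum_{i=0}^{k} (|C_i| - 1)$.

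The one step that needs genuine justification, and which I view as the main (minor) obstacle, is the bound $\sum_{i=0}^{k} |C_i| \le n$: since $P'$ is a shortest, hence simple, path in $G'$, the super-nodes $C_0,\ldots,C_k$ are pairwise distinct, and because the super-nodes partition $V$ their sizes sum to at most $n$. Therefore the lift has weight at most $d_{G',w}(\pi(u),\pi(v)) + n$, completing the sandwich and hence the lemma; the rest is bookkeeping.
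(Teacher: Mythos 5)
Your proof is correct and takes essentially the same approach as the paper's --- comparing path lengths in $G$ and $G'$ under contraction --- though the paper's proof is a two-line remark that only makes the projection direction explicit, while you also carry out the lift of a shortest path in $G'$ back to a walk in $G$ and justify the key bound $\sum_i |C_i| \le n$ via the distinctness of super-nodes on a simple path. That lifting step is precisely what the paper's argument leaves implicit, so your version is the more complete one; there are no gaps.
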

\begin{proof}
For any path $P$ in $G$, let $P'$ be the path in $G'$ obtained from $P$ after contraction.
Then
$$\text{length}(P')\leq\text{length}(P)\leq\text{length}(P')+n$$
as there are at most $n-1$ $1$-weight edges.
Thus we conclude the result.
\end{proof}

For inputs $x,y\in\{0,1\}^{2^s\cdot\ell}$ received by Alice and Bob, define
$$F(x,y)=\bigwedge_{i\in[1,2^s]}\left(\bigvee_{j\in[1,\ell]}\left(x_{i,j}\wedge y_{i,j}\right)\right),$$
i.e., $F=\text{AND}_{2^s}\circ(\text{OR}_\ell\circ\text{AND}^\ell_2)^{2^s}$.
We have the following lemma.

\begin{lemma}
$D_{G,w}\le\max\{2\alpha,\beta\}+n$ if $F(x,y)=1$, and $D_{G,w}\ge\min\{\alpha+\beta,3\alpha\}$ otherwise.
\label{lem:reduction}
\end{lemma}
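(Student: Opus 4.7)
First I would apply Lemma~\ref{lem:contraction} to reduce to the contracted graph $G'$ obtained by merging all weight-$1$ edges of $G$. Since $D_{G,w}\le D_{G',w}+n$ and $D_{G,w}\ge D_{G',w}$, it suffices to show $D_{G',w}\le\max\{2\alpha,\beta\}$ when $F(x,y)=1$ and $D_{G',w}\ge\min\{\alpha+\beta,3\alpha\}$ otherwise. The contraction collapses the binary tree to a single node $T$ and each extended path (together with its $V_A$- and $V_B$-endpoints, which are attached through weight-$1$ edges) to a single node $P_k$. Thus in $G'$ one has $T\!-\!P_k$ of weight $\alpha$ for every $k$, weight-$\alpha$ edges from $a_i$ to $P_{2j-1}$ if $\text{bin}(i,j)=0$ and to $P_{2j}$ otherwise, weight-$\alpha$ (resp.~$\beta$) edges from $a_i$ to $P_{2s+j}$ when $x_{i,j}=1$ (resp.~$0$), the symmetric structure for the $b_i$'s with the pair-path bits flipped and $y$ in place of $x$, and weight-$\alpha$ cliques on $\{a_1,\dots,a_{2^s}\}$ and $\{b_1,\dots,b_{2^s}\}$.

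For the upper bound under $F(x,y)=1$, I would enumerate the pair types in $G'$ and exhibit a route of weight at most $\max\{2\alpha,\beta\}$ in each case. Every vertex is within $\alpha$ of some $P_k$, so $d(T,v)\le 2\alpha$ for all $v$. For $i\ne j$, the standard binary-indexing trick picks a bit $k\in[1,s]$ with $\text{bin}(i,k)\ne\text{bin}(j,k)$, so $a_i$ and $b_j$ share a weight-$\alpha$ neighbour among $\{P_{2k-1},P_{2k}\}$, giving $d(a_i,b_j)\le 2\alpha$. Distances inside the two cliques are $\alpha$, and any $P_{2s+j}$-to-$V_A/V_B$ pair is bounded by $\min\{\beta,2\alpha\}\le\max\{2\alpha,\beta\}$ via either the direct edge or through $T$. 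The only case that the pair-path gadget cannot settle is $d(a_i,b_i)$, since $a_i$ and $b_i$ demand opposite bits on every pair path and so no $2$-hop route avoids the star paths; the hypothesis $F(x,y)=1$ supplies some $j_i$ with $x_{i,j_i}=y_{i,j_i}=1$, yielding the route $a_i\to P_{2s+j_i}\to b_i$ of weight $2\alpha$.

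For the lower bound under $F(x,y)=0$ I would fix an index $i^\star$ with $\bigvee_j(x_{i^\star,j}\wedge y_{i^\star,j})=0$ and bound $d(a_{i^\star},b_{i^\star})$ by classifying $a_{i^\star}\leadsto b_{i^\star}$ paths by length. The key structural observation is that $a_{i^\star}$ and $b_{i^\star}$ have no common pair-path neighbour, because their adjacencies on $P_{2j-1}$ and $P_{2j}$ require opposite bits of $i^\star$; hence every $2$-hop route passes through some star path $P_{2s+j}$ and has weight $\alpha+\beta$ (if exactly one of $x_{i^\star,j},y_{i^\star,j}$ equals $1$) or $2\beta$ (if both are $0$), and both are $\ge\alpha+\beta$ since $\alpha<\beta$. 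Any path of length $\ge 3$ has weight $\ge 3\alpha$ because every edge weight in $G'$ is at least $\alpha$. Taking the minimum gives $d(a_{i^\star},b_{i^\star})\ge\min\{\alpha+\beta,3\alpha\}$ and hence $D_{G,w}\ge D_{G',w}\ge\min\{\alpha+\beta,3\alpha\}$. The main obstacle is not the arithmetic but this $2$-hop structural check: one must verify that the bit-encoding gadget leaves the star paths as the only length-$2$ connections between $a_{i^\star}$ and $b_{i^\star}$, so that the input $(x,y)$ genuinely controls their distance and the gap between the two cases is preserved.
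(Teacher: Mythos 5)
Your proposal is correct and follows essentially the same route as the paper's proof: contract the weight-$1$ edges via Lemma~\ref{lem:contraction}, bound every pairwise distance in $G'$ by $\max\{2\alpha,\beta\}$ except for the pairs $(a_i,b_i)$, and observe that the bit-encoding gadget forces every two-edge $a_i$--$b_i$ path through some star node $a^\star_j$, so that $d(a_i,b_i)\le2\alpha$ when some $x_{i,j}=y_{i,j}=1$ and $d(a_i,b_i)\ge\min\{\alpha+\beta,3\alpha\}$ otherwise. (One tiny inaccuracy: your claimed $2\alpha$ route from a star node to $a_i$ ``through $T$'' actually costs $3\alpha$, but the direct edge of weight at most $\beta$ already gives the needed bound $\le\max\{2\alpha,\beta\}$.)
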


\begin{proof}
The graph $G'$ after contraction is given in Figure~\ref{fig:contraction}.
The binary tree is contracted to node $t$.
The $2s+\ell$ paths are contracted to nodes $a^0_1,a^1_1,\cdots,a^0_s,a^1_s$ and $a^\star_1,\cdots,a^\star_\ell$ respectively.
Note that $b_i$ is connected to $a^{\text{bin}(i,j)\oplus1}_j$ for $i\in[1,2^s],j\in[1,s]$.
we list upper bounds of the distances between any two nodes $u$ and $v$ in $G'$ on Table~\ref{tab:distance} with the corresponding paths, except for the distance between $a_i$ and $b_i$ with $i\in[1,2^s]$.

\begin{figure}[ht]
\centering
\includegraphics[width=0.5\linewidth]{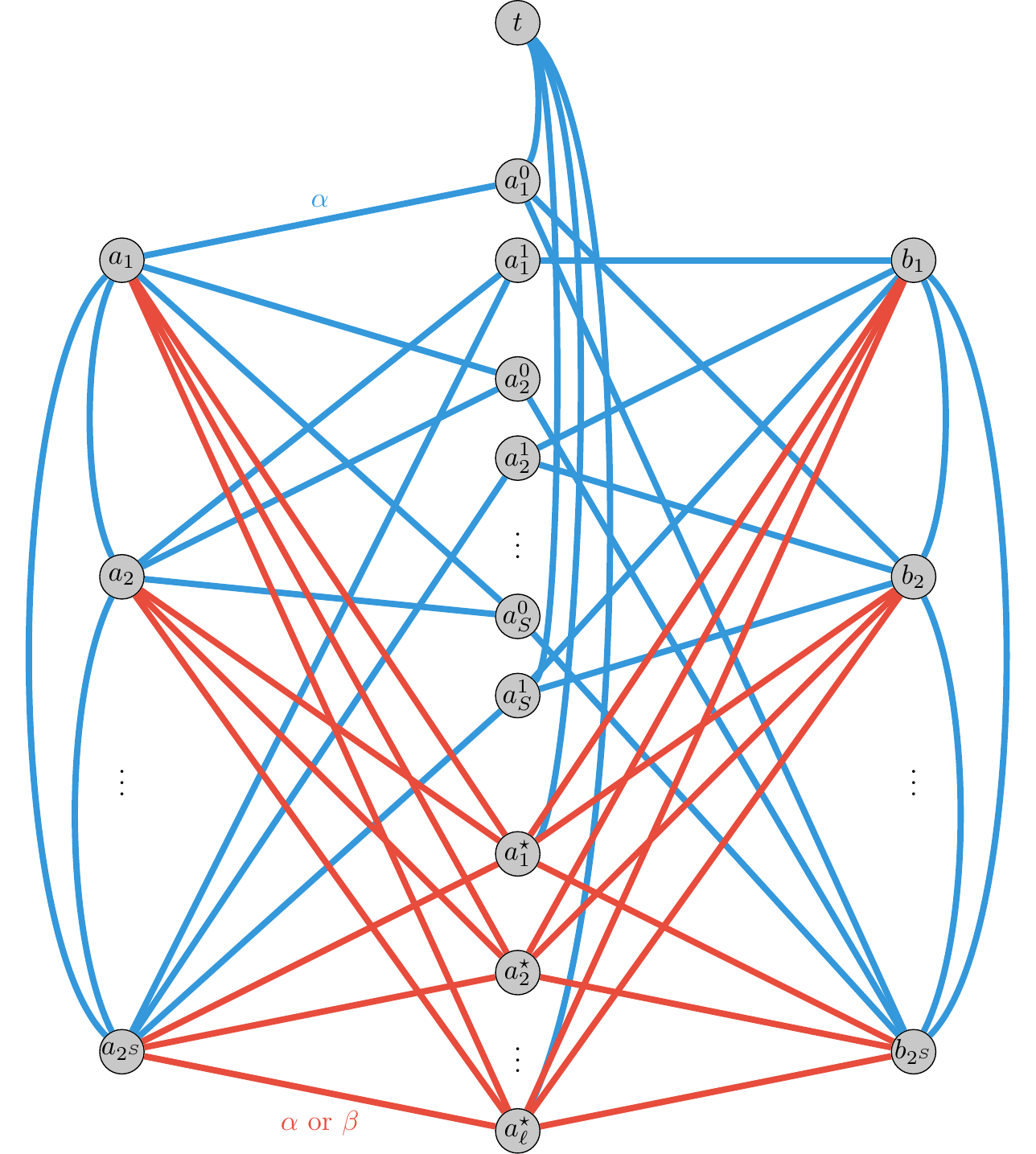}
\caption{
Graph $G'$ after contraction.
The distance between any pair of nodes, except $a_i$ and $b_i$ for $i\in[1,2^s]$, is at most $\max\{2\alpha,\beta\}$; and the distance between $a_i$ and $b_i$ is at most $2\alpha$ if there exists $j\in[1,\ell]$ such that $x_{i,j}=y_{i,j}=1$, otherwise it is at least $\min\{\alpha+\beta,3\alpha\}$.
Therefore, the diameter is at most $\max\{2\alpha,\beta\}$ if, for any $i\in[1,2^s]$, there exists $j\in[1,\ell]$ such that $x_{i,j}=y_{i,j}=1$, otherwise it is at least $\min\{\alpha+\beta,3\alpha\}$.
}
\label{fig:contraction}
\end{figure}

\begin{table*}[t]
\centering
\caption{
Distance between nodes in $G'$.
Let {\rm router} be any node in $\{a^0_1,a^1_1,\cdots,a^0_s,a^1_s,a^\star_1,\cdots,a^\star_\ell\}$.
${\rm adj}(i,j)$ denotes the integer after changing the $j$-th bit in binary expression of integer $i-1$, and ${\rm ind}(i,j)$ is the smallest $z\in[1,s]$ satisfying ${\rm bin}(i,z)\ne{\rm bin}(j,z)$.
}
\label{tab:distance}
\begin{tabular}{cccc}
\toprule[1.5pt]
$u$                                         & $v$                                               & $d_{G',w}(u,v)$   & Path \\
\midrule[1.5pt]
\multirow{3.6}{*}{$t$} 						& router                                            & $\le\alpha$       & $\left(t\to v\right)$ \\
											& $a_i$ ($i\in[1,2^s]$)                             & $\le2\alpha$      & $\left(t\to a^{\text{bin}(i,0)}_0\to a_i\right)$ \\
											& $b_i$ ($i\in[1,2^s]$)                             & $\le2\alpha$      & $\left(t\to a^{\text{bin}(i,0)\oplus1}_0\to b_i\right)$ \\
\midrule
\multirow{6.4}{*}{$a_i$ ($i\in[1,2^s]$)}    & $a_j$ ($j\ne i,j\in[1,2^s]$)                      & $\le\alpha$       & $\left(a_i\to a_j\right)$ \\
											& $a^{\text{bin}(i,j)}_j$ ($j\in[1,s]$)             & $\le\alpha$       & $\left(a_i\to a^{\text{bin}(i,j)}_j\right)$ \\
											& $a^{\text{bin}(i,j)\oplus1}_j$ ($j\in[1,s]$)      & $\le2\alpha$      & $\left(a_i\to a_{\text{adj}(i,j)}\to a^{\text{bin}(i,j)\oplus1}_j\right)$ \\
											& $b_j$ ($j\ne i,j\in[1,2^s]$)                      & $\le2\alpha$      & $\left(a_i\to a^{\text{bin}(i,\text{ind}(i,j))}_{\text{ind}(i,j)}\to b_j\right)$ \\
											& $a^\star_j$ ($j\in[1,\ell]$)                      & $\le\beta$        & $\left(a_i\to a^\star_j\right)$ \\
\midrule
\multirow{5.2}{*}{$b_i$ ($i\in[1,2^s]$)}    & $b_j$ ($j\ne i,j\in[1,2^s]$)                      & $\le\alpha$       & $\left(b_i\to b_j\right)$ \\
											& $a^{\text{bin}(i,j)\oplus1}_j$ ($j\in[1,s]$)      & $\le\alpha$       & $\left(b_i\to a^{\text{bin}(i,j)\oplus1}_j\right)$ \\
											& $a^{\text{bin}(i,j)}_j$ ($j\in[1,s]$)             & $\le2\alpha$      & $\left(b_i\to b_{\text{adj}(i,j)}\to a^{\text{bin}(i,j)}_j\right)$ \\
											& $a^\star_j$ ($j\in[1,\ell]$)                      & $\le\beta$        & $\left(b_i\to a^\star_j\right)$ \\
\midrule
router 										& router                                            & $\le2\alpha$      & $\left(u\to t\to v\right)$ \\
\bottomrule[1.5pt]
\end{tabular}
\end{table*}

Regarding the distance between $a_i$ and $b_i$ for $i\in[1,2^s]$, if there exists $j\in[1,\ell]$ such that $x_{i,j}=y_{i,j}=1$, then $w(\{a_i,a^\star_j\})=w(\{b_i,a^\star_j\})=\alpha$ and $d_{G',w}(a_i,b_i)\le2\alpha$ because of the path $(a_i\to a^\star_j\to b_i)$ in $G'$.
If there is no $j\in[1,\ell]$ such that $x_{i,j}=y_{i,j}=1$, we claim that  $d_{G',w}(a_i,b_i)\geq\min\{\alpha+\beta, 3\alpha\}$.
For any path between $a_i$ and $b_i$, if it contains exactly two edges, it is of the form $(a_i\to a^\star_j\to b_i)$ for some $j\in[1,\ell]$ by the construction of $G'$, and it is of length at least $\alpha+\beta$ by the assumption.
If it contains at least three edges, it is of length at least $3\alpha$.

If $F(x,y)=1$, then for any $i\in[1,2^s]$, there exists $j\in[1,\ell]$ such that $x_{i,j}=y_{i,j}=1$.
Hence,
$$
\begin{aligned}
& d_{G',w}(a_i,b_i)\le2\alpha,\forall i\in[1,2^s], \\
& D_{G',w}=\max_{u,v}d_{G',w}(u,v)\le\max\{2\alpha,\beta\}.
\end{aligned}
$$
Therefore, $D_{G,w}\le D_{G',w}+n\le\max\{2\alpha,\beta\}+n$ by Lemma~\ref{lem:contraction}.

If $F(x,y)=0$, then there exists $i\in[1,2^s]$ such that $x_{i,j}=0$ or $y_{i,j}=0$ for any $j\in[1,\ell]$.
Hence,
$$
\begin{aligned}
& d_{G',w}(a_i,b_i)=\min_{\text{path }P\text{ from }a_i\text{ to }b_i}\text{length}(P)\ge\min\{\alpha+\beta,3\alpha\}, \\
& D_{G',w}=\max_{u,v}d_{G',w}(u,v)\ge d_{G',w}(a_i,b_i)\ge\min\{\alpha+\beta,3\alpha\}.
\end{aligned}
$$
Therefore, $D_{G,w}\ge D_{G',w}\ge\min\{\alpha+\beta,3\alpha\}$ by Lemma~\ref{lem:contraction}.
\end{proof}

Combining Lemma~\ref{lem:simulation} and Lemma~\ref{lem:reduction}, we have a reduction from computing $F$ in the Server model to approximating diameter in the quantum CONGEST model.
To prove the communication complexity of $F$ in the Server model, we adopt the following lemma.

\begin{lemma}[Lemma B.4 in \cite{ElkinKNP14}, arXiv version]
Function ${\rm VER}:\{0,1,2,3\}\times\{0,1,2,3\}\to\{0,1\}$ is defined by ${\rm VER}(x,y)=1$ if and only if $x+y$ is equivalent to $0$ or $1$ modulo $4$, where $x,y\in\{0,1,2,3\}$.
Let $f:\{0,1\}^k\to\{0,1\}$ be an arbitrary function.
Then
$$Q^{sv}_\varepsilon(f\circ{\rm VER}^k)\ge\frac12\text{deg}_{4\varepsilon}(f)-O(1)$$
for any $0<\varepsilon<1/4$.
\label{lem:lifting}
\end{lemma}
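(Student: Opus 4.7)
The plan is to combine the polynomial method for quantum communication, adapted to the Server model, with a symmetrization argument over the gadget $\text{VER}$. Let $c = Q^{sv}_\varepsilon(f \circ \text{VER}^k)$ and fix a protocol $\Pi$ in the quantum Server model witnessing this bound: Alice holds $x \in \{0,1,2,3\}^k$, Bob holds $y \in \{0,1,2,3\}^k$, the server has no input, and the total number of qubits exchanged by Alice and Bob with the server is $c$.

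First, I would show that the acceptance probability of $\Pi$ is, as a function of the $4k$ input bits encoding $(x,y)$, a multilinear polynomial $P$ of total degree at most $2c$. This is the standard Kremer-style polynomial-method argument: after any prefix of the protocol, the amplitudes of the joint quantum state are polynomials in the input bits whose degree is bounded by the number of qubits transmitted so far from Alice and Bob combined. The server contributes nothing to this count, since its unitaries are input-independent and its free messages only mix qubits already in the system. Squaring for the final measurement doubles the degree, giving $\deg P \leq 2c$ and $|P(x,y) - (f\circ\text{VER}^k)(x,y)| \leq \varepsilon$ on every Boolean input.

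Second, I would symmetrize $P$ to extract an approximating polynomial for $f$ on $\{0,1\}^k$. For $z \in \{0,1\}$, let $\mu_z$ be the uniform distribution on $\text{VER}^{-1}(z) \subseteq \{0,1,2,3\}^2$; each preimage has exactly $8$ of the $16$ pairs. Define
\begin{equation*}
Q(z_1, \ldots, z_k) \;=\; \mathbb{E}_{(x_i,y_i) \sim \mu_{z_i},\, i=1,\ldots,k}\bigl[P(x,y)\bigr].
\end{equation*}
Since $\text{VER}(x_i,y_i) = z_i$ almost surely under $\mu_{z_i}$, linearity of expectation gives $|Q(z) - f(z)| \leq \varepsilon$ on every $z \in \{0,1\}^k$. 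For the degree bound, consider any multilinear monomial $\prod_{(i,b)\in T} x_{i,b}$ of $P$, and let $S = \{i : (i,\cdot) \in T\}$. By independence of the $\mu_{z_i}$'s across $i$, the expectation factorizes as $\prod_{i \in S} g_{B_i}(z_i)$ where each $g_{B_i}\colon \{0,1\} \to \mathbb{R}$ is affine in $z_i$. Hence the image of the monomial is a polynomial in $z$ of degree $|S| \leq |T| \leq \deg P$, so $\deg Q \leq \deg P \leq 2c$. Thus $Q$ is a polynomial of degree at most $2c$ that $\varepsilon$-approximates $f$.

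The relaxation from $\text{deg}_\varepsilon(f)$ to $\text{deg}_{4\varepsilon}(f)$ together with the $-O(1)$ slack absorbs standard technical massaging (renormalizing the polynomial into $[0,1]$ on non-Boolean inputs, handling the constant overhead of initialization/measurement qubits, and converting between one-sided and two-sided error conventions), yielding $c \geq \tfrac{1}{2}\text{deg}_{4\varepsilon}(f) - O(1)$. The main obstacle I anticipate is a careful execution of the first step in the Server model: one must be explicit that the server's unbounded free communication is harmless because the server has no input, so its registers can only carry amplitudes that are polynomials of a degree already accounted for by Alice's and Bob's qubit budgets. A secondary, more routine check is the affineness of each $g_{B_i}$ in step two, which follows from $|\text{VER}^{-1}(z)| = 8$ being constant in $z$ and direct inspection of the $\text{VER}$ truth table; this is exactly where the balanced preimage structure of $\text{VER}$ is used.
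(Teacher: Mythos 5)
This lemma is not proved in the paper at all: it is imported verbatim as Lemma~B.4 of the arXiv version of Elkin et al., so there is no in-paper argument to compare against and your proposal has to stand on its own. It does not, because Step~1 is false. The quantum polynomial method bounds degree by the number of \emph{queries} precisely because a query is the only way the algorithm touches the input; in a communication protocol Alice may apply, before sending a single qubit, an arbitrary unitary controlled on her entire input, so the amplitudes are arbitrary functions of $x$ from the outset. Concretely, a protocol in which Alice locally computes $\mathrm{PARITY}(x)$ and announces the one-bit answer has acceptance probability of full degree $n$ while $c=O(1)$. What Kremer-type arguments actually give after $c$ qubits of Alice/Bob communication is a decomposition of the final state into $2^{O(c)}$ terms of the form $\ket{A_m(x)}\ket{B_m(y)}$, i.e.\ a bound on the number of product summands (approximate rank), not on the degree in the input bits. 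Your own sanity check exposes the problem: the only property of $\mathrm{VER}$ you invoke is that its preimages are balanced, and the ``affineness of $g_{B_i}$'' is vacuous since any function on $\{0,1\}$ is affine. Take instead the projection gadget $g(x,y)=x_1$ (also with balanced preimages of size $8$): then $f\circ g^k$ depends only on bits Alice holds and is computable with $O(1)$ communication in the Server model, yet your symmetrization would conclude $\deg_{\varepsilon}(f)=O(1)$ for every $f$. So the argument as written proves a false statement and cannot be repaired without using the structure of $\mathrm{VER}$ in an essential way.

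The missing idea is exactly the one that makes such lifting statements hard. One needs an embedding of $z\in\{0,1\}^k$ into $(x,y)$ under which each party's \emph{marginal} view is independent of $z$ --- for $\mathrm{VER}$ this comes from the $\mathbb{Z}_4$ shift structure: taking $x_i$ uniform and $y_i=v_i-x_i \bmod 4$ with $v_i$ encoding $z_i$ makes both marginals uniform regardless of $z$ --- so that information about $z$ can enter the transcript only through qubits actually sent by Alice or Bob, and each such qubit can be charged at most one unit of degree. (This is also what makes the bound survive the server's free messages: the server's registers are $z$-independent until Alice or Bob writes into them.) Making that charging argument rigorous is the content of Elkin et al.'s proof and of the related pattern-matrix/block-composition machinery; your Step~2 symmetrization is fine but only becomes useful once a correct degree bound of this kind is in hand.
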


A read-once formula, which consists of AND gates, OR gates, and NOT gates, is a formula in which each variable appears exactly once.
We will need the following conclusion for approximate degree of read-once formulas.

\begin{lemma}[Theorem 6 in \cite{AaronsonBKRT21}]
For any read-once formula $f:\{0,1\}^k\to\{0,1\}$, $\text{deg}_{1/3}(f)=\Theta\left(\sqrt k\right)$.
\label{lem:read_once}
\end{lemma}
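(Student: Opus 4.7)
The plan is to prove matching $\Theta(\sqrt{k})$ upper and lower bounds on $\text{deg}_{1/3}(f)$ for any read-once Boolean formula $f$ on $k$ variables.

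For the upper bound $\text{deg}_{1/3}(f) = O(\sqrt{k})$, I would invoke Reichardt's span-program quantum algorithm for evaluating read-once formulas, which uses $O(\sqrt{k})$ bounded-error queries to the input. By the polynomial method of Beals, Buhrman, Cleve, Mosca and de Wolf, any $Q$-query bounded-error quantum algorithm induces a real polynomial of degree $2Q$ that approximates the function pointwise within $1/3$, so $\text{deg}_{1/3}(f) \le 2 Q_{1/3}(f) = O(\sqrt{k})$. A purely analytic alternative is to build the approximating polynomial by induction on the formula tree: each AND or OR gate of fan-in $b$ admits a Chebyshev-type univariate approximator of degree $O(\sqrt{b\log(1/\varepsilon)})$ from Nisan--Szegedy and Paturi, and these are composed gate-by-gate with Sherstov's robust approximating polynomials to keep the pointwise error bounded, giving a total degree of $O(\sqrt{k})$.

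For the lower bound $\text{deg}_{1/3}(f) = \Omega(\sqrt{k})$, the key structural fact I would use is that the exact real polynomial degree of any read-once formula on $k$ variables is exactly $k$. Expanding $f$ as a multilinear polynomial via the identities $\text{AND}(u,v) = uv$ and $\text{OR}(u,v) = u + v - uv$, the top monomial $x_1 x_2 \cdots x_k$ always appears with coefficient $\pm 1$ because every variable occurs exactly once in the formula; hence the exact polynomial degree $\text{deg}(f)$ equals $k$. I would then invoke the quantitative refinement of the degree vs.\ approximate-degree relationship from \cite{AaronsonBKRT21}, which, building on Huang's resolution of the sensitivity conjecture, sharpens the polynomial relation between $\text{deg}$ and $\text{deg}_{1/3}$ enough to conclude $\text{deg}_{1/3}(g) = \Omega(\sqrt{\text{deg}(g)})$ in this regime. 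Specializing to $g = f$ gives the desired $\Omega(\sqrt{k})$.

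The main difficulty lies in the lower bound. Elementary Nisan--Szegedy-style arguments give only $\text{deg}_{1/3}(f) = \Omega(\sqrt{\mathrm{bs}(f)/6})$, where $\mathrm{bs}(f)$ is block sensitivity, and this is insufficient: on the balanced AND-of-OR formula with $k = n^2$ leaves, one checks $\mathrm{bs}(f) = \Theta(n) = \Theta(\sqrt{k})$, so the block-sensitivity route yields only $\Omega(k^{1/4})$. Bridging the gap to the tight $\Omega(\sqrt{k})$ is precisely what requires Huang's sensitivity theorem together with the refinements of \cite{AaronsonBKRT21}; once those are invoked as a black box, the read-once structure enters only through the single clean identity $\text{deg}(f) = k$.
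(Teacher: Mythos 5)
The paper does not actually prove this lemma---it is imported verbatim as Theorem~6 of \cite{AaronsonBKRT21}---so there is no internal proof to compare against. Your reconstruction matches how the cited work obtains it: a read-once formula has exact degree $k$ (the top monomial $x_1\cdots x_k$ survives with coefficient $\pm1$ under the multilinear expansions of AND and OR, since every variable occurs once), and the main theorem of \cite{AaronsonBKRT21}, $\deg(g)=O\left(\text{deg}_{1/3}(g)^2\right)$ proved via spectral sensitivity on top of Huang's theorem, then gives the $\Omega\left(\sqrt k\right)$ lower bound; the $O\left(\sqrt k\right)$ upper bound via Reichardt's $O\left(\sqrt k\right)$-query read-once-formula evaluation algorithm plus the polynomial method is likewise the standard route. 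Your diagnosis that the elementary block-sensitivity route stalls at $\Omega(k^{1/4})$ on balanced AND-of-OR is correct and is precisely why the sharper degree-versus-approximate-degree relation is required. The one caveat is that the argument is not self-contained: the lower bound invokes the main theorem of the very paper the lemma is quoted from, which is unavoidable if one does not reprove the spectral-sensitivity machinery, and is exactly what the present paper sidesteps by citing the result as a black box.
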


\begin{lemma}
Given $s,\ell$ defined in Eq.~\eqref{eqn:paramters} where $\ell$ is a multiple of $4$, $F={\rm AND}_{2^s}\circ({\rm OR}_\ell\circ{\rm AND}^\ell_2)^{2^s}$ with inputs $x,y\in\{0,1\}^{2^s\cdot\ell}$, set
\[F(x,y)=\bigwedge_{i\in[1,2^s]}\left(\bigvee_{j\in[1,\ell]}\left(x_{i,j}\wedge y_{i,j}\right)\right).\]
It holds that
$$Q^{sv}_{1/12}(F)=\Omega\left(\sqrt{2^s\cdot\ell}\right).$$
\label{lem:and_or_and}
\end{lemma}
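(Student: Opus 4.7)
The plan is to apply Lemma~\ref{lem:lifting} to a suitable read-once formula and then reduce the resulting composition to $F$. Let $k=2^s\cdot\ell/4$ (an integer since $\ell$ is a multiple of $4$), and let $f=\text{AND}_{2^s}\circ\text{OR}_{\ell/4}^{2^s}$, which is a read-once Boolean formula on $k$ variables. Lemma~\ref{lem:read_once} gives $\text{deg}_{1/3}(f)=\Theta(\sqrt{k})=\Theta(\sqrt{2^s\cdot\ell})$. Applying Lemma~\ref{lem:lifting} with $\varepsilon=1/12$ (so that $4\varepsilon=1/3$) then yields
\[Q^{sv}_{1/12}(f\circ\text{VER}^k)\ge\frac{1}{2}\,\text{deg}_{1/3}(f)-O(1)=\Omega\bigl(\sqrt{2^s\cdot\ell}\bigr).\]

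Next, I will show that each copy of $\text{VER}$ can be simulated by an $\text{OR}$ of four $\text{AND}_2$s, each taking one Alice bit and one Bob bit. A direct enumeration of the $16$ inputs reveals the four-rectangle cover
\[\text{VER}^{-1}(1)=\bigl(\{0\}\times\{0,1\}\bigr)\cup\bigl(\{1\}\times\{0,3\}\bigr)\cup\bigl(\{2\}\times\{2,3\}\bigr)\cup\bigl(\{3\}\times\{1,2\}\bigr).\]
Letting $f_r(u)$ and $g_r(v)$ denote the indicators of the Alice and Bob sides of the $r$-th rectangle, we obtain $\text{VER}(u,v)=\bigvee_{r=1}^{4}\bigl(f_r(u)\wedge g_r(v)\bigr)$ for all $u,v\in\{0,1,2,3\}$.

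Finally, given an instance of $f\circ\text{VER}^k$ whose VER inputs are indexed by $(i,j')\in[1,2^s]\times[1,\ell/4]$, Alice and Bob reduce to $F$ by setting $x_{i,4(j'-1)+r}=f_r(u_{i,j'})$ and $y_{i,4(j'-1)+r}=g_r(v_{i,j'})$ for each $r\in[1,4]$. Under this substitution, the inner $\text{OR}_\ell\circ\text{AND}_2^\ell$ in each row of $F$ collapses to $\bigvee_{j'=1}^{\ell/4}\text{VER}(u_{i,j'},v_{i,j'})$, and the outer $\text{AND}_{2^s}$ then reproduces $(f\circ\text{VER}^k)(u,v)$ exactly. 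Hence any Server-model protocol solving $F$ with error $1/12$ also solves $f\circ\text{VER}^k$ with the same error and communication, giving $Q^{sv}_{1/12}(F)\ge Q^{sv}_{1/12}(f\circ\text{VER}^k)=\Omega(\sqrt{2^s\cdot\ell})$. The delicate point is the inner fan-in $\ell/4$: a rectangle cover of $\text{VER}^{-1}(1)$ of size larger than $4$ would force a smaller read-once $f$ and a weaker degree bound, so the tightness of the four-rectangle cover (together with the divisibility hypothesis on $\ell$) is what is needed to match the target $\sqrt{2^s\cdot\ell}$.
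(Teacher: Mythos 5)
Your proposal is correct and follows essentially the same route as the paper: decompose $F$ as the read-once formula $f=\mathrm{AND}_{2^s}\circ\mathrm{OR}_{\ell/4}^{2^s}$ composed with a four-input gadget, invoke Lemma~\ref{lem:read_once} for $\deg_{1/3}(f)=\Theta(\sqrt{2^s\cdot\ell})$, and apply Lemma~\ref{lem:lifting} with $\varepsilon=1/12$; your explicit four-rectangle cover of ${\rm VER}^{-1}(1)$ is just a transparent restatement of the paper's observation that ${\rm VER}$ is a promise restriction of ${\rm GDT}=\mathrm{OR}_4\circ\mathrm{AND}_2^4$. Your closing remark about the ``tightness'' of the four-rectangle cover is unnecessary (any constant-size cover would yield the same asymptotic bound, up to the divisibility condition on $\ell$), but it does not affect the validity of the argument.
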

\begin{proof}
The function $F$ can be rewritten as $F=f\circ \text{GDT}^{2^s\cdot\ell/4}$, where
$f=\text{AND}_{2^s}\circ\text{OR}^{2^s}_{\ell/4}$ and $\text{GDT}=\text{OR}_4\circ\text{AND}^4_2$.
Obviously the function $f$ is a read-once formula.
It can be seen that the function VER is actually a {\it promise version} of the function GDT where inputs $x,y\in\{0,1\}^4$ satisfy
$$x\in\{0011,1001,1100,0110\},y\in\{0001,0010,0100,1000\}.$$
Thus, the lower bound for $f\circ\text{VER}^{2^s\cdot\ell/4}$ clearly implies the lower bound for $f\circ\text{GDT}^{2^s\cdot\ell/4}$.
Therefore,
$$Q^{sv}_{1/12}(f\circ\text{GDT}^{2^s\cdot\ell/4})\ge Q^{sv}_{1/12}(f\circ\text{VER}^{2^s\cdot\ell/4})\ge\frac12\text{deg}_{1/3}(f)-O(1)=\Omega\left(\sqrt{2^s\cdot\ell}\right).$$
The second inequality is due to Lemma~\ref{lem:lifting} and the last inequality is due to Lemma~\ref{lem:read_once}.
\end{proof}

\begin{proof}[Proof of Theorem~\ref{thm:diameter}]
Let $\mathcal A$ be a $T$-round algorithm ($T<2^h/2$) in the quantum CONGEST model which, for any weighted graph $(G,w)$, computes a $(\frac32-\varepsilon)$-approximation of $D_{G,w}$ (constant $\varepsilon\in(0,1/2]$) with probability at least $11/12$.
Alice and Bob, who receive $x,y\in\{0,1\}^{2^s\cdot\ell}$, respectively, construct the network $G$ as described above with parameters $h,s,\ell$ given in Eq.~\eqref{eqn:paramters}.
The number of nodes is
$$n=\left(2^{h+1}-1\right)+\left(2s+\ell\right)\left(2^h+2\right)+2\cdot2^s=\Theta\left(2^{3h/2}\right).$$
And the unweighted diameter is $D_G=\Theta(h)=\Theta(\log n)$.
Let $w$ be the weight function.
Due to Lemma~\ref{lem:simulation}, they can simulate $\mathcal A$ on $(G,w)$ in the quantum Server model with communication complexity $O(T\cdot h\cdot B)$ where $B$ denotes the bandwidth.
With probability at least $\frac{11}{12}$, Alice and Bob output an approximation $\widetilde D_{G,w}$ satisfying $D_{G,w}\le\widetilde D_{G,w}\le(\frac32-\varepsilon)D_{G,w}$.
We set $\alpha=n^2$ and $\beta=2n^2$.
By Lemma~\ref{lem:reduction},
$$
\begin{aligned}
\text{if }F(x,y)=1,\widetilde D_{G,w} & \le\left(\frac32-\varepsilon\right)D_{G,w}\le\left(\frac32-\varepsilon\right)\left(\max\{2\alpha,\beta\}+n\right) \\
& =3n^2-\left(2\varepsilon n^2-\left(\frac32-\varepsilon\right)n\right); \\ 
\text{if }F(x,y)=0,\widetilde D_{G,w} & \ge D_{G,w}\ge\min\{\alpha+\beta,3\alpha\}=3n^2. \\
\end{aligned}
$$
For large enough $n$, Alice and Bob can distinguish whether $F(x,y)=1$ or not with probability at least $\frac{11}{12}$ in the Server model, and thus $Q^{sv}_{1/12}(F)=O(T\cdot h\cdot B)$.
Due to Lemma~\ref{lem:and_or_and},
$$T=\Omega\left(\frac{\sqrt{2^s\cdot\ell}}{h\cdot B}\right)=\Omega\left(\frac{2^h}{h\cdot B}\right)=\Omega\left(\frac{n^{2/3}}{ \log^2 n}\right),$$
where the last equality is by the choice of $h$ and the the bandwidth $B=\Theta(\log n)$.
Therefore, the round complexity of approximating diameter is $\Omega\left(\min\left\{2^h/2,\frac{n^{2/3}}{\log^2n}\right\}\right)=\Omega\left(\frac{n^{2/3}}{\log^2n}\right)$.
\end{proof}

\subsection{Hardness of Approximating Radius}

We choose the same set of parameters $h,s,\ell$ given in Eq.~\eqref{eqn:paramters}.
The argument is very close to the one for diameter.

\begin{theorem}[Restated]
For any constant $\varepsilon\in(0,\frac12]$, any algorithm, with probability at least $\frac{11}{12}$, computing a $(\frac32-\varepsilon)$-approximation of radius in the quantum CONGEST model requires $\Omega\left(\frac{n^{2/3}}{\log^2n}\right)$ rounds, even when the unweighted diameter is $\Theta(\log n)$, where $n$ denotes the number of nodes.
\label{thm:radius}
\end{theorem}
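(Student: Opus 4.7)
The plan is to mirror the proof of Theorem~\ref{thm:diameter} with a slightly different graph gadget whose \emph{radius} (rather than diameter) encodes $F$. I keep the network skeleton of Section~\ref{sec:reduction} (the shared binary tree of height $h$ and the $2s+\ell$ long paths) and the same parameter choice $h,s,\ell$ from Eq.~\eqref{eqn:paramters}, so that $n=\Theta(2^{3h/2})$ and $D_G=\Theta(\log n)$. The simulation lemma (Lemma~\ref{lem:simulation}) still converts a $T$-round quantum CONGEST algorithm into an $O(T\cdot h\cdot B)$-communication Server-model protocol, and Lemma~\ref{lem:and_or_and} still gives a matching lower bound on $Q^{sv}_{1/12}(F)$, so the whole argument reduces to exhibiting a $3/2$-gap between the radii of yes- and no-instances.

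The key new ingredient is a redesign of $V_A,E_A,V_B,E_B$ (and possibly of the attaching edges in $E'$ and their weights) so that, after contracting all unit-weight edges via Lemma~\ref{lem:contraction}, the contracted graph $G'$ satisfies the radius analogue of Lemma~\ref{lem:reduction}: $R_{G',w}\le\max\{2\alpha,\beta\}$ when $F(x,y)=1$, and $R_{G',w}\ge\min\{\alpha+\beta,3\alpha\}$ when $F(x,y)=0$. The positive direction is obtained by exhibiting one explicit candidate center $c^\star$ (naturally the root $t$ of the tree, or a purpose-built hub node), whose eccentricity can be bounded by the same case analysis as in Table~\ref{tab:distance} once a yes-coordinate $j$ with $x_{i,j}=y_{i,j}=1$ is guaranteed for every $i$. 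Setting $\alpha=n^2$, $\beta=2n^2$ then yields a $3/2$-gap, and the same reduction that closed the proof of Theorem~\ref{thm:diameter} delivers $T=\Omega(n^{2/3}/\log^2 n)$.

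The main obstacle is the universal lower bound $e_{G',w}(v)\ge\min\{\alpha+\beta,3\alpha\}$ for \emph{every} vertex $v$ in the no-instance, rather than for a single far-apart pair as in the diameter argument. Naively, when $F(x,y)=0$ one still has a bottleneck pair $(a_i,b_i)$ with $d_{G',w}(a_i,b_i)\ge\min\{\alpha+\beta,3\alpha\}$, but the triangle inequality only yields $\max\{d(v,a_i),d(v,b_i)\}\ge\frac12 d(a_i,b_i)$, which falls short by a factor of two. To close this gap the gadget must be arranged so that for every potential center $v$ the shortest route to one of $a_i,b_i$ is forced to cross either an input-dependent edge of weight $\beta$ or at least three edges of weight $\alpha$; in particular, the short bypass $\bigl(a_j\to a^{\text{bin}(j,\text{ind}(j,i))}_{\text{ind}(j,i)}\to b_i\bigr)$ of length $2\alpha$ that appears in Table~\ref{tab:distance} must be disabled for any $v\ne a_i,b_i$. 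A natural way to achieve this is to remove the $\{a_j\}$-clique and the binary-expansion cross-edges (or reweight them to $\beta$), and to place symmetric attachments so that each candidate center sees a no-instance witness on both sides of the tree. Once this per-vertex eccentricity lower bound is verified, combining the resulting gap with Lemma~\ref{lem:simulation}, Lemma~\ref{lem:lifting}, and Lemma~\ref{lem:read_once} concludes the proof verbatim as in Theorem~\ref{thm:diameter}.
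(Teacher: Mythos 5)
There is a genuine gap. You keep the diameter function $F=\text{AND}_{2^s}\circ(\text{OR}_\ell\circ\text{AND}_2^\ell)^{2^s}$ and Lemma~\ref{lem:and_or_and}, and hope to redesign the gadget so that the \emph{radius} encodes $F$; but this encoding cannot work with this family of gadgets, and your sketched fixes do not repair it. A no-instance of $F$ is witnessed by a row $i_0$ whose supports in $x$ and $y$ are disjoint; to force \emph{every} candidate center $v$ to have $e_{G',w}(v)\ge 3\alpha$ you would need the shortest paths from $v$ to $a_{i_0}$ and from $v$ to $b_{i_0}$ to be routed through the \emph{same} hub $a^\star_j$, whereas each of the two shortest paths independently chooses its own best $j$. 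Concretely, after removing the clique and reweighting the binary-expansion cross-edges as you suggest, the contracted tree node $t$ still satisfies $d(t,a_i)\le 2\alpha$ as soon as some $x_{i,j}=1$ and $d(t,b_i)\le 2\alpha$ as soon as some $y_{i,j'}=1$ (possibly $j'\ne j$), so $e(t)\le 2\alpha$ on inputs such as $x_{i,1}=y_{i,2}=1$ for every $i$ (all other bits $0$), which have $F(x,y)=0$. You correctly identify the factor-two loss from the triangle inequality as the obstacle, but the proposal does not overcome it, and no explicit construction is given that does.

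The paper's proof takes a different and simpler route: it changes the target function to $F'=\text{OR}_{2^s\cdot\ell}\circ\text{AND}_2^{2^s\cdot\ell}$, whose Server-model complexity is still $\Omega\left(\sqrt{2^s\cdot\ell}\right)$ because $\text{OR}$ is read-once (Lemma~\ref{lem:or_and}), and adds a single hub $a_0$ joined to each $a_i$ by an edge of weight $2\alpha$. Every vertex outside $\{a_1,\dots,a_{2^s}\}$ then has eccentricity at least $3\alpha$ unconditionally, since reaching $a_0$ already costs $2\alpha$ for its last edge plus at least $\alpha$ before that; hence the radius is governed by $\min_i e_{G',w}(a_i)$, and $e_{G',w}(a_i)\le\max\{2\alpha,\beta\}$ exactly when row $i$ contains a $j$ with $x_{i,j}=y_{i,j}=1$ (Lemma~\ref{lem:reduction_radius}, using Table~\ref{tab:distance}). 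This matches the existential structure of the radius with the outer $\text{OR}$ of $F'$ and yields the same final bound via Lemma~\ref{lem:simulation}. If you want to salvage your plan, the necessary change is to switch to $F'$ and introduce such a hub; insisting on the outer $\text{AND}$ of $F$ is the step that fails.
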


The weighted graph $(G,w)$ that we construct for showing hardness of approximating radius is almost the same except that we add a node $a_0$ in $V_A$ along with edges $\{a_0,a_1\},\cdots,\{a_0,a_{2^s}\}$ of weight $2\alpha$.
Here we only show in Figure~\ref{fig:radius} the graph $G'$ after contracting all edges of weight $1$ (the green edges are the new-added edges).

\begin{figure}[ht]
\centering
\includegraphics[width=0.6\linewidth]{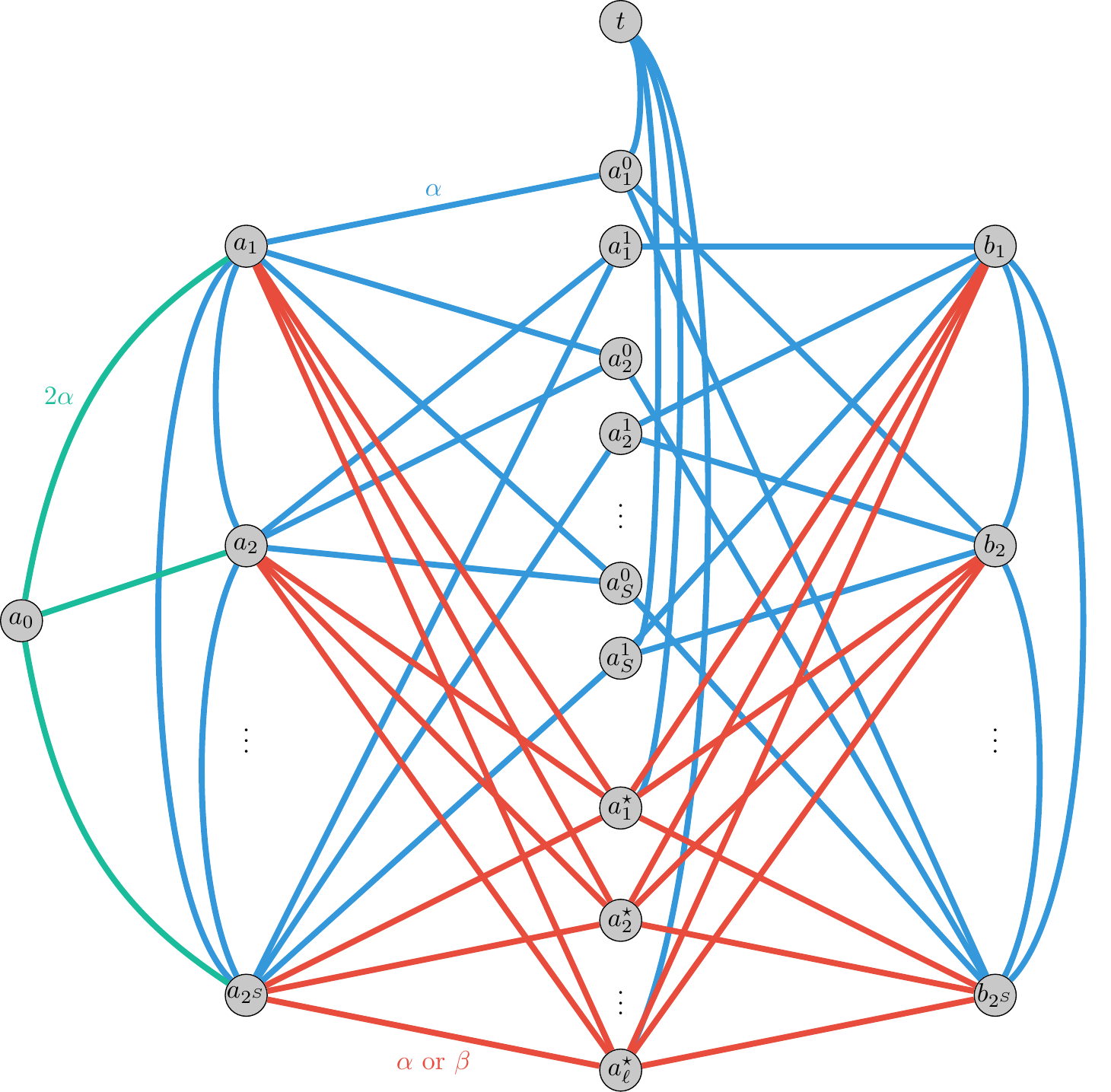}
\caption{
Graph $G'$ (after contraction) for approximating radius.
The additional green edges are of weight $2\alpha$.
The eccentricity of any node, except $a_i$ for $i\in[1,2^s]$, is at least $3\alpha$; and the eccentricity of $a_i$ is at most $\max\{2\alpha,\beta\}$ if there exists $j\in[1,\ell]$ such that $x_{i,j}=y_{i,j}=1$, otherwise it is at least $\min\{\alpha+\beta,3\alpha\}$.
Therefore, the radius is at most $\max\{2\alpha,\beta\}$ if there exist $i\in[1,2^s],j\in[1,\ell]$ such that $x_{i,j}=y_{i,j}=1$, otherwise it is at least $\min\{\alpha+\beta,3\alpha\}$.
}
\label{fig:radius}
\end{figure}

For inputs $x,y\in\{0,1\}^{2^s\cdot\ell}$ define
$$F'(x,y)=\bigvee_{i\in[1,2^s],j\in[1,\ell]}(x_{i,j}\wedge y_{i,j}).$$
We have the following lemma.

\begin{lemma}
$R_{G,w}\le\max\{2\alpha,\beta\}+n$ if $F'(x,y)=1$, and $R_{G,w}\ge\min\{\alpha+\beta,3\alpha\}$ otherwise.
\label{lem:reduction_radius}
\end{lemma}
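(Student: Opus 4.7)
The plan is to follow the strategy of Lemma~\ref{lem:reduction}: first apply Lemma~\ref{lem:contraction} so that it suffices to bound $R_{G',w}$ on the contracted graph $G'$ depicted in Figure~\ref{fig:radius}, which is just the diameter graph $G'$ with the additional node $a_0$ and its weight-$2\alpha$ edges $\{a_0, a_i\}$ to the clique $\{a_1, \ldots, a_{2^s}\}$. As a preliminary sanity check I verify that none of the distance upper bounds in Table~\ref{tab:distance}, nor the diameter-case lower bound $d_{G',w}(a_i, b_i) \ge \min\{\alpha + \beta, 3\alpha\}$, are violated by the new edges: any path that routes through $a_0$ uses at least two weight-$2\alpha$ edges incident to $a_0$, contributing an extra $4\alpha$, which already exceeds each quantity I need to preserve.

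For the upper-bound direction ($F'(x,y) = 1$), pick any $(i, j)$ with $x_{i,j} = y_{i,j} = 1$. The two-edge path $(a_i \to a^\star_j \to b_i)$ has length $2\alpha$, so $d_{G',w}(a_i, b_i) \le 2\alpha$. Combined with the Table~\ref{tab:distance} bounds (all of which cap distances out of $a_i$ by $\max\{2\alpha, \beta\}$) together with the direct edge $d_{G',w}(a_i, a_0) = 2\alpha$, I obtain $e_{G',w}(a_i) \le \max\{2\alpha, \beta\}$. Hence $R_{G',w} \le \max\{2\alpha, \beta\}$, and Lemma~\ref{lem:contraction} yields $R_{G,w} \le \max\{2\alpha, \beta\} + n$.

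For the lower-bound direction ($F'(x,y) = 0$), no pair $(i,j)$ satisfies $x_{i,j} = y_{i,j} = 1$, so the diameter-case argument now applies to \emph{every} $i \in [1, 2^s]$: each two-edge path $a_i \to a^\star_j \to b_i$ uses at least one weight-$\beta$ edge, while any path with three or more edges has length at least $3\alpha$, giving $d_{G',w}(a_i, b_i) \ge \min\{\alpha + \beta, 3\alpha\}$ and thus $e_{G',w}(a_i) \ge \min\{\alpha + \beta, 3\alpha\}$. For every remaining node $v$ of $G'$ I exhibit a witness at distance at least $3\alpha$: if $v$ is $t$ or any router, the witness is $a_0$, since every path from $v$ to $a_0$ must end with a weight-$2\alpha$ edge and contain at least one additional edge of weight $\ge \alpha$; if $v = b_i$ the witness is $a_i$ by the previous bound; and if $v = a_0$ the witness is $t$, again by the same two-hop reasoning giving distance at least $4\alpha$. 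Therefore $R_{G',w} \ge \min\{\alpha + \beta, 3\alpha\}$, and Lemma~\ref{lem:contraction} gives $R_{G,w} \ge \min\{\alpha + \beta, 3\alpha\}$.

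The main obstacle is purely the bookkeeping of the preliminary shortcut check: once it is established that any use of $a_0$ along a shortest path forces the addition of $4\alpha$, the upper-bound case is immediate from Table~\ref{tab:distance}, and the lower-bound case is the natural strengthening of the diameter argument, extending the single-bad-index condition in Lemma~\ref{lem:reduction} (coming from $F(x,y) = 0$) to the all-indices condition (coming from $F'(x,y) = 0$) required to make \emph{every} node's eccentricity large simultaneously.
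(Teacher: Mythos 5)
Your proposal is correct and follows essentially the same route as the paper's proof: reduce to the contracted graph $G'$ via Lemma~\ref{lem:contraction}, bound $e_{G',w}(a_i)$ using Table~\ref{tab:distance} together with the $a_i$--$b_i$ case analysis, and force every other node's eccentricity up to $3\alpha$ via the weight-$2\alpha$ edges at $a_0$. The only cosmetic difference is that the paper uses $a_0$ as the single witness for all nodes outside $\{a_1,\ldots,a_{2^s}\}$, whereas you pick per-node witnesses; both work.
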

\begin{proof}
It suffices to estimate the radius of $(G',w)$ by Lemma~\ref{lem:contraction}.
For any node $v\notin\{a_0,a_1,\cdots,a_{2^s}\}$, $d_{G',w}(a_0,v)\ge3\alpha$.
This is because that any path from $a_0$ to $v$ is of the form $(a_0\to a_i\leadsto v)$ for some $i\in[1,2^s]$, where $w(\{a_0,a_i\})=2\alpha$, and the remaining edges on the path have total weight at least $\alpha$.
Therefore, $e_{G',w}(v)\ge3\alpha$ for any $v\notin\{a_1,\cdots,a_{2^s}\}$.
To estimate the eccentricity of $a_i$ for $i\in[1,2^s]$, we have $d(a_i,v)\le\max\{2\alpha,\beta\}$ for any $v\ne b_i$ as shown on Table~\ref{tab:distance}, and $d_{G',w}(a_i,b_i)\le2\alpha$ if there exists $j\in[1,\ell]$ such that $x_{i,j}=y_{i,j}=1$, and $d_{G',w}(a_i,b_i)\ge\min\{\alpha+\beta,3\alpha\}$ otherwise.

If $F'(x,y)=1$, then there are $i\in[1,2^s]$ and $j\in[1,\ell]$ such that $x_{i,j}=y_{i,j}=1$, and thus
$$
\begin{aligned}
& d_{G',w}(a_i,b_i)\le2\alpha,\\
& e_{G',w}(a_i)=\max_v d_{G',w}(a_i,v)\le\max\{2\alpha,\beta\}, \\
& R_{G',w}=\min_u e_{G',w}(u)\le e_{G',w}(a_i)\le\max\{2\alpha,\beta\}.
\end{aligned}
$$
Therefore, $R_{G,w}\le R_{G',w}+n\le\max\{2\alpha,\beta\}+n$ by Lemma~\ref{lem:contraction}.

If $F'(x,y)=0$, then for any $i\in[1,2^s]$ and $j\in[1,\ell]$, $x_{i,j}=0$ or $y_{i,j}=0$, and thus
$$
\begin{aligned}
& d_{G',w}(a_i,b_i)\ge\min\{\alpha+\beta,3\alpha\},\forall i\in[1,2^s], \\
& e_{G',w}(a_i)=\max_v d_{G',w}(a_i,v)\ge d_{G',w}(a_i,b_i) \\
& \qquad\ge\min\{\alpha+\beta,3\alpha\},\forall i\in[1,2^s], \\
& R_{G',w}=\min_u e_{G',w}(u)\ge\min\{\alpha+\beta,3\alpha\}.
\end{aligned}
$$
Therefore, $R_{G,w}\ge R_{G',w}\ge\min\{\alpha+\beta,3\alpha\}$ by Lemma~\ref{lem:contraction}.
\end{proof}

Similar to Lemma~\ref{lem:and_or_and}, one can prove a lower bound on communication complexity of $F'$ in the quantum Server model.

\begin{lemma}
Given $s,\ell$ defined in Eq.~\eqref{eqn:paramters} where $2^s\cdot\ell$ is a multiple of $4$,
$F'={\rm OR}_{2^s\cdot\ell}\circ{\rm AND}^{2^s\cdot\ell}_2$ with inputs $x,y\in\{0,1\}^{2^s\cdot\ell}$,
set
$$F'(x,y)=\bigvee_{i\in[1,2^s],j\in[1,\ell]}(x_{i,j}\wedge y_{i,j}).$$
It holds that
$$Q^{sv}_{1/12}(F')=\Omega\left(\sqrt{2^s\cdot\ell}\right).$$
\label{lem:or_and}
\end{lemma}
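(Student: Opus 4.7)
The plan is to mirror the proof of Lemma~\ref{lem:and_or_and}, now specialized to an outer OR rather than an outer AND of ORs. The first step is to express $F'$ in the composition form $F'=f\circ\text{GDT}^{2^s\cdot\ell/4}$ by partitioning the $2^s\cdot\ell$ coordinate pairs into $2^s\cdot\ell/4$ blocks of four pairs each (permissible since $2^s\cdot\ell$ is divisible by $4$). Within each block the restriction of $F'$ is $\text{OR}_4\circ\text{AND}_2^4=\text{GDT}$, and because $\text{OR}_{2^s\cdot\ell}=\text{OR}_{2^s\cdot\ell/4}\circ\text{OR}_4^{2^s\cdot\ell/4}$, the $2^s\cdot\ell/4$ block outputs are aggregated by a single outer OR gate. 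Hence $f=\text{OR}_{2^s\cdot\ell/4}$, which is trivially a read-once formula on $2^s\cdot\ell/4$ variables.

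With this decomposition in hand, I would invoke Lemma~\ref{lem:read_once} to obtain $\text{deg}_{1/3}(f)=\Theta\left(\sqrt{2^s\cdot\ell/4}\right)=\Theta\left(\sqrt{2^s\cdot\ell}\right)$, and then apply Lemma~\ref{lem:lifting} with $\varepsilon=1/12$ to conclude
$$Q^{sv}_{1/12}\left(f\circ\text{VER}^{2^s\cdot\ell/4}\right)\geq\frac{1}{2}\text{deg}_{1/3}(f)-O(1)=\Omega\left(\sqrt{2^s\cdot\ell}\right).$$

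Finally, exactly as in the proof of Lemma~\ref{lem:and_or_and}, the function VER is a promise version of GDT, so any protocol computing $f\circ\text{GDT}^{2^s\cdot\ell/4}$ on all inputs also computes $f\circ\text{VER}^{2^s\cdot\ell/4}$ on the promised inputs. This yields
$$Q^{sv}_{1/12}(F')=Q^{sv}_{1/12}\left(f\circ\text{GDT}^{2^s\cdot\ell/4}\right)\geq Q^{sv}_{1/12}\left(f\circ\text{VER}^{2^s\cdot\ell/4}\right)=\Omega\left(\sqrt{2^s\cdot\ell}\right),$$
which is the desired bound. There is essentially no obstacle beyond identifying the correct outer read-once function $f$; the argument is a direct specialization of the one for the AND-of-ORs-of-ANDs function $F$ in Lemma~\ref{lem:and_or_and}, with $f$ collapsing to a single OR gate.
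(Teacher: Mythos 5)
Your proposal is correct and follows exactly the paper's own argument: rewrite $F'$ as $\mathrm{OR}_{2^s\cdot\ell/4}\circ\mathrm{GDT}^{2^s\cdot\ell/4}$, note the outer function is read-once, and reuse the lifting-plus-approximate-degree chain from Lemma~\ref{lem:and_or_and}. The only difference is that you spell out the steps the paper delegates to the earlier proof, and you do so accurately.
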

\begin{proof}
The function $F'$ can be rewritten as $F'=f'\circ \text{GDT}^{2^s\cdot\ell/4}$,
where $f'=\text{OR}_{2^s\cdot\ell/4}$.
Note that $f'$ is still a read-once formula.
Thus the rest of proof is the same as the one in Lemma~\ref{lem:and_or_and}.
\end{proof}

\begin{proof}[Proof of Theorem~\ref{thm:radius}]
Let $\mathcal A$ be a $T$-round algorithm ($T<2^h/2$) in the quantum CONGEST model which, for any weighted graph $(G,w)$, computes a $(\frac32-\varepsilon)$-approximation of $R_{G,w}$ (constant $\varepsilon\in(0,\frac12]$) with probability at least $\frac{11}{12}$.
Alice and Bob, who receive $x,y\in\{0,1\}^{2^s\cdot\ell}$ as input, construct the weighted graph $(G,w)$ described above with the number of node $n=\Theta(2^{3h/2})$. The unweighted diameter $D_G=\Theta(\log n)$.
Due to Lemma~\ref{lem:simulation}, Alice and Bob can simulate $\mathcal A$ on $(G,w)$ in the quantum Server model with communication complexity $O(T\cdot h\cdot B)$.
Then with probability at least $\frac{11}{12}$, Alice and Bob compute $\widetilde R_{G,w}$ satisfying $R_{G,w}\le\widetilde R_{G,w}\le(\frac32-\varepsilon)R_{G,w}$.
We set $\alpha=n^2$ and $\beta=2n^2$.
Due to Lemma~\ref{lem:reduction_radius},
$$
\begin{aligned}
& \text{if }F'(x,y)=1,\widetilde R_{G,w}\le3n^2-\left(2\varepsilon n^2-\left(\frac32-\varepsilon\right)n\right); \\
& \text{if }F'(x,y)=0,\widetilde R_{G,w}\ge3n^2. \\
\end{aligned}
$$
For large enough $n$, Alice and Bob can compute $F'$ with probability at least $\frac{11}{12}$ in the Server model, and thus $Q^{sv}_{1/12}(F')=O(T\cdot h\cdot B)$.
Due to Lemma~\ref{lem:or_and}, $T=\Omega\left(\frac{n^{2/3}}{\log^2n}\right)$.
Therefore, the round complexity of approximating radius is $\Omega\left(\min\left\{2^h/2,\frac{n^{2/3}}{\log^2n}\right\}\right)=\Omega\left(\frac{n^{2/3}}{\log^2n}\right)$.
\end{proof}

\section*{Acknowledgements}
We thank the anonymous reviewers' feedback.
This work was supported in part by, the National Key R\&D Program of China 2018YFB1003202, National Natural Science Foundation of China (Grant No. 61972191), the Program for Innovative Talents and Entrepreneur in Jiangsu, and Anhui Initiative in Quantum Information Technologies (Grant No. AHY150100).

\bibliographystyle{acm}
\bibliography{main}

\appendix

\section{Toolkits in Nanongkai's Algorithm}
\label{sec:toolkits}

Let $G=(V,E)$ be a distributed network with a weight function $w:E\to\mathbb N^+$ and a pre-defined node $\text{leader}\in V$.
We assume that each node initially knows $n=|V|$ and $W=\max_{e\in E}w(e)$.
The parameters $\varepsilon,r,\ell,k$ are chosen the same as in Eq.~\eqref{eqn:alg_parameters}.
We follow the background of Section~\ref{sec:approx_ecc}.
Given a vertex set $S\subseteq V$, let $\widetilde d^\ell_{G,w}(\cdot)$, $(G'_S,w'_S)$, $N^k_S(\cdot)$, $(G''_S,w''_S)$, $\widetilde d_{G,w,S}(\cdot)$ be as defined in Lemma~\ref{lem:bounded_hop_distance} and Lemma~\ref{lem:approx_distance}.
The following lemmas and algorithms are summarized from \cite[arXiv version]{Nanongkai14STOC}.

\begin{lemma}[Theorem 3.2 in \cite{Nanongkai14STOC}]
For $s\in V$ known to all nodes, there exists an algorithm (Algorithm~\ref{alg:bounded_hop_sssp}) such that in $\widetilde O(\ell/\varepsilon)$ rounds, each $v\in V$ knows $\widetilde d^\ell_{G,w}(s,v)$, and during the whole computation, each node broadcasts $O(\log n)$ messages of size $O(\log n)$ to its neighbors.
\label{lem:bounded_hop_sssp}
\end{lemma}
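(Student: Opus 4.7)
The plan is to instantiate, for each scaling parameter $i\in\{0,1,\ldots,I\}$ with $I=O(\log(nW))=O(\log n)$, a truncated single-source shortest-path computation on the integer-weighted graph $(G,w_i)$ from source $s$, retaining only distances at most $L:=(1+2/\varepsilon)\ell$. By Lemma~\ref{lem:bounded_hop_distance}, once each node $v$ knows $d_{G,w_i}(s,v)$ for every $i$ with $d_{G,w_i}(s,v)\le L$, it can locally recover $\widetilde d^\ell_{G,w}(s,v)=\min_i d_{G,w_i}(s,v)\cdot\varepsilon\cdot 2^i/(2\ell)$ over the valid scales.

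I would implement each scaled SSSP as a bounded-depth BFS on the ``subdivided'' graph where an edge $e$ of integer weight $w_i(e)$ is viewed as a chain of $w_i(e)$ unit-weight virtual edges. On the subdivided graph, the BFS wave reaches a real node $v$ at exactly round $d_{G,w_i}(s,v)$, and this first-contact time is its correct distance. The concrete implementation: whenever a neighbor $u$ broadcasts its finalized distance $d_u$ (tagged with the scale index $i$), node $v$ computes the induced candidate distance $d_u+w_i(u,v)$ and schedules a broadcast at that round; if a smaller candidate arrives in the meantime the schedule is overwritten; if the scheduled round arrives and no smaller candidate has come in, $v$ broadcasts its (final) distance. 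Because distances are monotone non-increasing and the BFS wave first-touches $v$ at the true distance, $v$ emits exactly one broadcast for scale $i$.

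Running all $O(\log n)$ scales in parallel, with each message carrying the pair (scale index $i$, distance) in $O(\log n)$ bits, each node sends $O(\log n)$ messages in total — one per scale — and all scheduled broadcasts fit on the channels because a node has at most one broadcast pending per scale per round. Truncating each scale at round $L$ means the entire computation terminates in $L+O(1)=O(\ell/\varepsilon)$ rounds, giving the claimed $\widetilde O(\ell/\varepsilon)$ bound.

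The main delicacy is arguing that the broadcast-once-per-scale invariant survives the asynchronous arrival of candidate distances: one must verify that the scheduled broadcast at round $d_{G,w_i}(s,v)$ does fire, i.e.\ that $v$ has received from some optimal predecessor $u^\star$ (with $d_{G,w_i}(s,u^\star)+w_i(u^\star,v)=d_{G,w_i}(s,v)$) the message $d_{u^\star}$ at some round $\le d_{G,w_i}(s,v)$. This follows inductively on $d_{G,w_i}(s,v)$: $u^\star$ fires exactly at round $d_{G,w_i}(s,u^\star)$ by the induction hypothesis, its message reaches $v$ in one round, and $v$ then schedules a broadcast for round $d_{G,w_i}(s,u^\star)+w_i(u^\star,v)=d_{G,w_i}(s,v)$, which is the correct distance and hence also the minimum scheduled time. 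Combining these ingredients yields the stated round complexity together with the $O(\log n)$-broadcast bound per node, which is precisely what later lemmas need to bound congestion in the multi-source variants.
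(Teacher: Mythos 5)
Your proposal is correct and follows essentially the same route as the paper, which realizes this lemma via Algorithm~\ref{alg:bounded_hop_sssp} (the scaling loop over $i\le\log\frac{2nW}{\varepsilon}$) and Algorithm~\ref{alg:bounded_distance_sssp} (the truncated BFS wave in which a node broadcasts exactly once, at the round equal to its finalized distance); your once-per-scale invariant is precisely the condition $d_{G,w_i}(s,v)=r-t$ there. The only cosmetic difference is that you run the $O(\log n)$ scales in parallel while the paper's algorithm runs them sequentially, which changes nothing in the $\widetilde O(\ell/\varepsilon)$ bound or the $O(\log n)$ per-node broadcast count.
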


\begin{algorithm}
\caption{Bounded-Hop SSSP $(G,w,s,\ell,\varepsilon)$}
\label{alg:bounded_hop_sssp}
\begin{algorithmic}[1]
\Require Network $(G,w)$, source node $s$ and parameters $\ell,\varepsilon>0$.
\Ensure Each node $v$ knows $\widetilde d^\ell_{G,w}(s,v)$.
\State Initially, $\widetilde d^\ell_{G,w}(s,v)\leftarrow\infty$ for each $v\in V$.
\For{$i=0$ to $\log\frac{2nW}\varepsilon$}
\State Run bounded-distance SSSP with parameters $(G,w_i,s,(1+2/\varepsilon)\ell)$ using Algorithm~\ref{alg:bounded_distance_sssp}.
\For{each $v\in V$} in parallel
\If{$d_{G,w_i}(s,v)\le(1+2/\varepsilon)\ell$}
\State $\widetilde d^\ell_{G,w}(s,v)\leftarrow\min\left\{\widetilde d^\ell_{G,w}(s,v),d_{G,w_i}(s,v)\right\}$.
\EndIf
\EndFor
\EndFor
\end{algorithmic}
\end{algorithm}

\begin{algorithm}
\caption{Bounded-Distance SSSP $(G,w,s,L)$}
\label{alg:bounded_distance_sssp}
\begin{algorithmic}[1]
\Require Network $(G,w)$, source node $s$ and parameter $L>0$.
\Ensure Each node $v$ knows whether $d_{G,w}(s,v)\le L$, and if so, it further knows $d_{G,w}(s,v)$.
\State Initially, $d_{G,w}(s,s)\leftarrow0$ and $d_{G,w}(s,v)\leftarrow\infty$ for each $v\ne s$.
\State Let $t$ be the time this algorithm starts.
\For{round $r=t$ to $t+L$}
\For{each $v\in V$} in parallel
\For{each message $(u,d_{G,w}(s,u))$ received in the previous round}
\If{$d_{G,w}(s,u)+w(\{u,v\})\le L$}
\State $d_{G,w}(s,v)\leftarrow\min\left\{d_{G,w}(s,v),d_{G,w}(s,u)+w(\{u,v\})\right\}$.
\EndIf
\EndFor
\If{$d_{G,w}(s,v)=r-t$}
\State $v$ broadcasts message $(v,d_{G,w}(s,v))$ to all neighbors.
\EndIf
\EndFor
\EndFor
\end{algorithmic}
\end{algorithm}

\begin{lemma}[Theorem 3.6 and Lemma 3.7 in \cite{Nanongkai14STOC}]
There exist an algorithm (Algorithm~\ref{alg:bounded_hop_mssp}) such that in $\widetilde O(D_G+\ell/\varepsilon+|S|)$ rounds, each node $v\in V$ knows $\widetilde d^\ell_{G,w}(s,v)$ for each $s\in S$, with probability of failure at most $n^{-c}$, for any constant $c>0$ and sufficiently large $n$.
\label{lem:bounded_hop_mssp}
\end{lemma}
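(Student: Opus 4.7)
The plan is to reduce the multi-source problem to $|S|$ parallel invocations of the single-source algorithm of Lemma~\ref{lem:bounded_hop_sssp}, and to schedule them with random delays so that no edge of $O(\log n)$ bandwidth is overloaded. First I would spend $O(D_G)$ rounds building a BFS tree rooted at the designated node $\text{leader}$, and another $O(D_G+|S|)$ rounds pipelining the identifiers of $S$ together with a short shared random seed through the tree. From this seed every node derives the same sequence of independent random delays $\tau_s$ uniform on $\{0,1,\dots,C|S|\log n\}$ for each $s\in S$, where $C$ is a sufficiently large constant to be fixed later.

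The main phase starts the copy of Algorithm~\ref{alg:bounded_hop_sssp} for source $s$ at round $\tau_s$, for every $s \in S$ simultaneously. By Lemma~\ref{lem:bounded_hop_sssp} each copy finishes in $T=\widetilde O(\ell/\varepsilon)$ rounds and causes each node to broadcast only $O(\log n)$ messages of $O(\log n)$ bits to its neighbors. If no message were ever delayed by congestion, the whole schedule would end by round $T+C|S|\log n = \widetilde O(\ell/\varepsilon + |S|)$, which together with the BFS preamble gives the claimed $\widetilde O(D_G+\ell/\varepsilon+|S|)$ bound.

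To verify that congestion never exceeds bandwidth, fix an edge $e$ and a round $r$. For each source $s$, at most $O(\log n)$ broadcasts are transmitted across $e$ during the entire copy, and every such broadcast is shifted in time by the random offset $\tau_s$, so each of them lands at round $r$ with probability $O(1/(C|S|\log n))$. Summing over the $|S|$ sources and the $O(\log n)$ broadcasts per source yields an expected load of $O(1/C)$ on $e$ in round $r$. A Chernoff bound makes this load at most $O(\log n)$ with probability at least $1-n^{-c-3}$, and a union bound over the $O(n^2)$ edges and $\text{poly}(n)$ rounds yields the global success probability $1-n^{-c}$.

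The main obstacle is to make this Leighton--Maggs--Rao-style scheduling argument rigorous in the CONGEST setting: one must check that the rounds at which a node emits broadcasts for source $s$ are determined solely by $\tau_s$ and by the fixed execution of Algorithm~\ref{alg:bounded_hop_sssp} for that single source, not by the delays of other sources. The layered structure of Algorithm~\ref{alg:bounded_hop_sssp} is what makes this possible, since each invocation of Bounded-Distance SSSP makes every node broadcast at most once---at the moment its distance estimate is finalized---and that moment depends only on the inputs of the single copy at hand. Once this independence across copies is established, the Chernoff-plus-union-bound estimate above goes through and the lemma follows.
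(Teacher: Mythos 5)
Your proposal is correct and matches the paper's approach: the paper's Algorithm~\ref{alg:bounded_hop_mssp} is exactly this random-delay scheduling of $|S|$ parallel copies of Algorithm~\ref{alg:bounded_hop_sssp} (with the leader sampling delays $\Delta_1,\dots,\Delta_b\in[0,b\log n]$ and broadcasting them, then stretching each logical round by a $\lceil\log n\rceil$ factor to absorb congestion), and the paper defers the Chernoff-plus-union-bound analysis to Nanongkai's Theorem 3.6 and Lemma 3.7. Your key observation that the broadcast times of each copy depend only on its own delay is precisely what makes the per-round loads a sum of independent indicators across sources, as required.
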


\begin{algorithm}[!h]
\caption{Bounded-Hop Multi-Source Shortest Paths $(G,w,S,\ell,\varepsilon)$}
\label{alg:bounded_hop_mssp}
\begin{algorithmic}[1]
\Require Network $(G,w)$, set of source nodes $S$ and parameters $\ell,\varepsilon>0$.
\Ensure With high probability, each node $v$ knows $\widetilde d^\ell_{G,w}(s,v)$ for each $s\in S$.
\State Assume that $S=\{s_1,\cdots,s_b\}$.
Let $\mathcal A_i$ be the Algorithm~\ref{alg:bounded_hop_sssp} with parameters $(G,w,s_i,\ell,\varepsilon)$ for each $i\in[1,k]$ (each $\mathcal A_i$ is of $T=\widetilde O(\ell/\varepsilon)$ rounds, and during the whole computation of $\mathcal A_i$, each node broadcasts $O(\log n)$ messages to its neighbors due to Lemma~\ref{lem:bounded_hop_sssp}).
\State The node leader samples $\Delta_1,\cdots,\Delta_b\in[0,b\log n]$ independently and uniformly at random for delaying algorithms $\mathcal A_1,\cdots,\mathcal A_k$, and broadcasts them by pipelining in $O(D_G+b)$ rounds.
\For{$r=1$ to $T+b\log n$}
\For{each $v\in V$} in parallel
\State Let $a=|\ \{i\in[1,b]:v\text{ broadcasts a message in the }(r-\Delta_i)\text{-th round of }\mathcal A_i\}\ |$.
\If{$a\le\lceil\log n\rceil$}
\State $v$ broadcasts these $a$ messages in the next $\lceil\log n\rceil$ rounds.
\Else
\State The algorithm fails.
\EndIf
\EndFor
\EndFor
\end{algorithmic}
\end{algorithm}

\newpage

\begin{lemma}[Theorem 4.5 in \cite{Nanongkai14STOC}]
After the overlay network $(G'_S,w'_S)$ is embedded, there exists an algorithm (Algorithm~\ref{alg:embedding}) which further embeds the overlay network $(G''_S,w''_S)$ in $\widetilde O(D_G+|S|k)$
rounds.
\label{lem:embedding}
\end{lemma}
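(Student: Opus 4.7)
The goal is to let every $s\in S$ store $w''_S(\{s,u\})$ for all $u\in S$. Since the embedding of $(G'_S,w'_S)$ already places $w'_S(\{s,u\})$ in $s$'s memory, by the definition of $w''_S$ the only new information $s$ needs is the shortest-path distance $d_{G'_S,w'_S}(s,u)$ for pairs with $u\in N^k_S(s)$ or $s\in N^k_S(u)$. I would split the task into (a) each $s$ learning its own $k$ overlay-nearest nodes together with the true overlay distances to them, and (b) disseminating those distances to the other endpoints, so that the symmetric case $s\in N^k_S(u)$ is covered as well.

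First I would build a BFS tree of $G$ rooted at leader in $O(D_G)$ rounds, which serves as a broadcast primitive delivering any $M$ messages of $O(\log n)$ bits to all nodes in $O(D_G+M)$ rounds. For (a), simulate a pipelined truncated Bellman--Ford on the overlay: every $s\in S$ maintains a length-$k$ list of best distance estimates, initialized from the direct weights $w'_S(\{s,\cdot\})$ that are already locally known. In each iteration, every $s$ broadcasts its current list along the BFS tree; upon receiving the list from $u$, each $s$ relaxes via $\widehat d(s,v)\leftarrow\min\{\widehat d(s,v),\,w'_S(\{s,u\})+\widehat d(u,v)\}$ and re-truncates to the $k$ smallest entries. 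After enough iterations, $\widehat d(s,u)=d_{G'_S,w'_S}(s,u)$ for every $u\in N^k_S(s)$. For (b), each $s$ issues one pipelined broadcast of the $k$ labelled pairs $\bigl(u,d_{G'_S,w'_S}(s,u)\bigr)$; every node then reads off any entry indexed to itself and combines it with its locally-computed row to assemble its complete $(G''_S,w''_S)$-incidence list.

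Each iteration of (a) transmits $|S|$ length-$O(k)$ broadcasts, i.e.\ $O(|S|k)$ words in total, costing $\widetilde O(D_G+|S|k)$ rounds on the BFS tree; step (b) contributes the same. The hard part will be bounding the number of iterations of the truncated Bellman--Ford, since an arbitrary complete weighted graph could in principle require $|S|-1$ iterations. To control this, one must exploit the structural fact that the weights $w'_S$ arise as approximate graph distances in $(G,w)$, so that the overlay shortest path between $k$-close endpoints is realised by only a small (polylogarithmic) number of hops and, crucially, by intermediate nodes whose length-$k$ lists never need to drop the relevant destinations during truncation. Establishing this invariant is the technical heart of the argument, and is closely related to Nanongkai's hop-diameter estimate $H_{G''_S,w''_S}\le 4|S|/k$ imported in the remark after Lemma~\ref{lem:approx_distance}.
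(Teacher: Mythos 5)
There is a genuine gap here. The intended algorithm (Algorithm~\ref{alg:embedding}) needs no iteration at all: each $s\in S$ broadcasts, once, only its $k$ \emph{lightest incident edges} of $(G'_S,w'_S)$ over a BFS tree, costing $O(D_G+|S|k)$ rounds, and then \emph{every} node locally reconstructs $N^k_S(u)$ together with $d_{G'_S,w'_S}(u,v)$ for all $u\in S$ and $v\in N^k_S(u)$. The combinatorial fact that makes this purely local step valid (Observation 3.12 in \cite{Nanongkai14STOC}) is exactly the piece your proposal is missing: on a shortest path in $G'_S$ from $s$ to any $v\in N^k_S(s)$, every intermediate node already lies in $N^k_S(s)$, and every edge is among the $k$ lightest edges incident to one of its endpoints --- otherwise that endpoint would have $k$ incident edges exhibiting $k$ nodes strictly closer to $s$ than $v$, contradicting $v\in N^k_S(s)$. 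Hence the union of all nodes' $k$ lightest incident edges already contains a shortest path to each $k$-nearest neighbor, both directions of the condition ``$u\in N^k_S(v)$ or $v\in N^k_S(u)$'' can be checked locally by everyone, and your dissemination phase (b) is unnecessary.

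Your iterative truncated Bellman--Ford, by contrast, does not meet the round bound as stated, and you correctly flag this yourself. The shortest path in $G'_S$ from $s$ to its $k$-th nearest neighbor can have up to $k$ hops, so without the observation above you may need up to $k$ relaxation iterations, each costing $\widetilde O(D_G+|S|k)$ rounds, for a total of $\widetilde O(kD_G+|S|k^2)$; with the paper's choice $k=\sqrt{D_G}$ this overshoots the budget. The proposed rescue via $H_{G''_S,w''_S}\le 4|S|/k$ does not apply: that bound concerns the shortcut graph $G''_S$ you are trying to \emph{build}, not the graph $G'_S$ on which your Bellman--Ford runs, and $4|S|/k$ is in any case far from polylogarithmic. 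The ``technical heart'' you identify (that truncation never drops a needed destination) is resolved precisely by the local-computability observation, which eliminates the iteration entirely.
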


\begin{algorithm}
\caption{Embedding Overlay Network $(G,w,S,G'_S,w'_S,k)$}
\label{alg:embedding}
\begin{algorithmic}[1]
\Require Network $(G,w)$, set of source nodes $S$, overlay network $(G'_S,w'_S)$ and parameter $k>0$.
\Ensure It embeds the overlay network $(G''_S,w''_S)$.
\State Each node $s\in S$ broadcasts the $k$ shortest edges incident to it on $(G'_S,w'_S)$ (this can be done in $O(D_G+|S|k)$ rounds).
\For{each $s\in S$} locally
\State $s$ computes $N^k_S(s)$, along with the weight $w''_S(\{s,v\})=d_{G'_S,w'_S}(s,v)$ for each $v\in N^k_S(s)$ (this can be done due to Observation 3.12 in \cite{Nanongkai14STOC}).
\EndFor
\end{algorithmic}
\end{algorithm}

\begin{lemma}[Lemma 4.6 in \cite{Nanongkai14STOC}]
For node $s\in S$ known to all nodes, after the overlay network $(G''_S,w''_S)$ is embedded, there exists an algorithm (Algorithm~\ref{alg:sssp_on_overlay}) such that in $\widetilde O\left(\frac{|S|}{\varepsilon\cdot k}\cdot D_G+|S|\right)$ rounds, each node $v\in V$ knows for each $u\in S$ the value of $\widetilde d^{4|S|/k}_{G''_S,w''_S}(s,u)$.
\label{lem:sssp_on_overlay}
\end{lemma}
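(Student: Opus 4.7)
The plan is to run Bounded-Hop SSSP (Algorithm~\ref{alg:bounded_hop_sssp}) on the embedded overlay network $(G''_S,w''_S)$ from source $s$ with hop bound $\ell'=4|S|/k$ and precision $\varepsilon$. Since the overlay has already been embedded by Algorithm~\ref{alg:embedding}, every $s'\in S$ locally knows the (at most $k$) overlay edges in $N^k_S(s')$ together with their $w''_S$-weights, and can therefore make local routing/relaxation decisions for the overlay; all non-$S$ nodes act purely as relays in the underlying network $G$.

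By Lemma~\ref{lem:bounded_hop_sssp} applied at the overlay level, the procedure takes $T'=\widetilde O(\ell'/\varepsilon)=\widetilde O(|S|/(\varepsilon\cdot k))$ overlay rounds and, over the entire execution, every overlay node broadcasts only $O(\log n)$ messages of $O(\log n)$ bits to its overlay neighbors. The remaining task is to simulate each overlay round inside the physical network $G$. I would do this by having every outgoing overlay message be broadcast along a fixed BFS tree of $G$ (rooted at the leader), so that each recipient $S$-node simply filters the messages it needs. A single such flood costs $O(D_G)$ physical rounds in isolation, but the total number of overlay broadcasts across the whole execution is at most $O(|S|\log n)$. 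Pipelining these broadcasts up and down the BFS tree in the standard way (or, if needed, with the random-delay schedule from the proof of Lemma~\ref{lem:bounded_hop_mssp}) gives total physical round complexity
\[
\widetilde O\bigl(T'\cdot D_G + |S|\log n\bigr)=\widetilde O\!\left(\frac{|S|}{\varepsilon\cdot k}D_G+|S|\right),
\]
where the first term accounts for the $O(D_G)$ latency per overlay round and the second for the aggregate message load once congestion is amortized across all overlay rounds.

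After each $u\in S$ has learned $\widetilde d^{\ell'}_{G''_S,w''_S}(s,u)$, one convergecast-then-broadcast on the BFS tree of $G$ distributes all $|S|$ values to every $v\in V$ in $O(D_G+|S|)$ rounds, which is absorbed into the bound above. Correctness at the overlay level is inherited verbatim from Lemma~\ref{lem:bounded_hop_sssp}.

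The main obstacle will be to rigorously reconcile causality (overlay round $r$ can begin at an $S$-node only after it has received all the round-$(r-1)$ updates it needs) with pipelining (so that distinct overlay rounds' broadcasts overlap in the physical network, avoiding an $O(T'\cdot|S|)$ blow-up). The cleanest route is to tag every broadcast with its originating overlay round and apply the same random-delay pipelining analysis used in Algorithm~\ref{alg:bounded_hop_mssp}: that scheme already guarantees per-edge congestion $\widetilde O(1)$ per physical round, and once that is in place the decomposition \emph{latency $\widetilde O(T'\cdot D_G)$ + throughput $\widetilde O(|S|)$} yields exactly the claimed round count with high probability.
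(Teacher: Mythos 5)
Your proposal is correct and takes essentially the same route as the paper's Algorithm~\ref{alg:sssp_on_overlay}: run Bounded-Hop SSSP on the embedded overlay $(G''_S,w''_S)$ and simulate each overlay round by pipelined global broadcasts in $G$, charging $O(D_G)$ latency per overlay round plus $\widetilde O(|S|)$ total for the $O(|S|\log n)$ messages guaranteed by Lemma~\ref{lem:bounded_hop_sssp}. The ``main obstacle'' you flag dissolves because no cross-round pipelining or random-delay scheduling is needed: the paper executes overlay rounds strictly sequentially, paying $O(D_G+a_r)$ for the $a_r$ broadcasts of overlay round $r$, and summing gives $\widetilde O\left(\frac{|S|}{\varepsilon\cdot k}D_G+|S|\right)$ directly.
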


\begin{algorithm}
\caption{SSSP on Overlay Network $(G,w,S,\varepsilon,k,G''_S,w''_S,s)$}
\label{alg:sssp_on_overlay}
\begin{algorithmic}[1]
\Require Network $(G,w)$, set of source nodes $S$, parameters $\varepsilon,k>0$, overlay network $(G''_S,w''_S)$ and source node $s\in S$.
\Ensure Each node $v$ knows $\widetilde d^{4|S|/k}_{G''_S,w''_S}(s,u)$ for each $u\in S$.
\State Let $\mathcal A$ be the Algorithm~\ref{alg:bounded_hop_sssp} with parameters $(G''_S,w''_S,s,4|S|/k,\varepsilon)$ ($\mathcal A$ is of $T=\widetilde O\left(\frac{|S|}{\varepsilon\cdot k}\right)$ rounds, and during the whole computation of $\mathcal A$, each node broadcasts $O(\log n)$ messages to its neighbors due to Lemma~\ref{lem:bounded_hop_sssp}).
\For{$r=1$ to $T$}
\State Let $a$ be the number of nodes in $G''_S$ that want to broadcast a message to its neighbors in $G''_S$ in the $r$-th round of $\mathcal A$.
Count $a$ and make every nodes in $G$ knows $a$ in $O(D_G)$ rounds.
\State Each node in $G''_S$, which wants to send a message to each of its neighbors in $G''_S$, broadcasts such message to all nodes in $G$ (this takes $O(D_G+a)$ rounds).
\EndFor
\end{algorithmic}
\end{algorithm}

\end{document}